\newtheorem{thm}{Theorem}[section]
 \newtheorem{dfn}[thm]{Definition}
 \newtheorem{lem}[thm]{Lemma}
 \newtheorem{rem}[thm]{Remark}
\newtheorem{asm}[thm]{Assumption}
\numberwithin{equation}{section}
\newcommand{\cA}{\mathcal{A}}
 \newcommand{\cD}{\mathcal{D}}
\newcommand{\cF}{\mathcal{F}}
\newcommand{\cL}{\mathcal{L}}
 \newcommand{\cQ}{\mathcal{Q}}
\newcommand{\cT}{\mathcal{T}}
\newcommand{\Om}{{\Omega}}
\def \D{\mathbb{D}}
\def \E{\mathbb{E}}
\def \F{\mathbb{F}}
 \def\P{\mathbb{P}}
\def \Q{\mathbb{Q}}
\def \R{\mathbb{R}}
\def \Sb{\mathbb {S}}
\def \cad{$c\grave{a}dl\grave{a}g$}
\def\reff#1{{\rm(\ref{#1})}}
\begin{document}
\title{Convex duality with transaction costs}

\thanks{Research of Dolinsky is partly supported by a
 European Union Career Integration grant CIG-618235
 and a Einstein Foundation Berlin grant A 20 12 137.
 Research of Soner
 is partly supported
 by the ETH Foundation, the Swiss Finance Institute
 and a Swiss National Foundation grant SNF 200021$\_$153555.}

  \author{Yan Dolinsky \address{
 Department of Statistics, Hebrew University of Jerusalem, Israel. \hspace{10pt}
 {e.mail: yan.dolinsky@mail.huji.ac.il}}}
\author{H.Mete  Soner \address{
 Department of Mathematics, ETH Zurich \&
 Swiss Finance Institute. \hspace{10pt}
 {e.mail: hmsoner@ethz.ch}}\\
  ${}$\\
 Hebrew University of Jerusalem
 and  ETH Zurich}

\date{\today}

\begin{abstract}
Convex duality for two two different  super--replication
problems in a continuous time financial market
with proportional transaction cost is proved.
In this market,
static hedging
in a finite number of options,
in addition to usual dynamic hedging
with the underlying stock, are allowed.
The first one of the problems considered is
the model--independent hedging that
requires the super--replication to hold for every
continuous path.  In the second one the market model is given
through a probability measure $\P$ and the inequalities are
understood $\P$ almost surely.
The main result, using the convex duality,
proves that the two super--replication problems have the same value
provided that $\P$ satisfies the conditional full support property.
Hence, the transaction costs prevents one from using the
structure of a specific model to reduce the
 super--replication cost.
 \end{abstract}

\subjclass[2010]{91G10, 60G44}
 \keywords{European Options, Model--free Hedging, Semi Static Hedging, Transaction Costs, Conditional Full Support }

\maketitle \markboth{Y.Dolinsky and H.M.Soner}{Duality with Transaction Costs}
\renewcommand{\theequation}{\arabic{section}.\arabic{equation}}
\pagenumbering{arabic}

\section{Introduction}\label{sec:1}\setcounter{equation}{0}

The problem of super-replication is a convex optimization
problem in which the investor minimizes
the cost of a portfolio among those satisfying the
hedging constraints.
In the classical case the financial market
is frictionless and
the investors can buy or sell any quantity of
the stocks and other financial
instruments  at the same
price.  Then, the corresponding problem
is linear and
the optimization problem is in fact an infinite
dimensional
linear program.  In the
quantitative finance literature,
this problem is well studied and
is known to be related to arbitrage.
One central result is a convex duality
result, which contains deep
financial insights including the fundamental
theorem of asset pricing.

In the celebrated papers \cite{Dalang-Morton-Willinger, Dalbaen-Schachermayer, Kreps}
the financial market is modelled through a probability measure $\P$
that describes the future movements of the stock prices
in the time interval $[0,T]$.
The stock price process $S$
and the measure $\P$ are defined on a
probability space $\Om$ and a filtration $\F=\{\cF_t\}_{\{t \in [0,T]\}}$.
The main object of study
is an uncertain liability that will
be revealed in the future.  It is usually
modelled through  a $\cF_T$ measurable
random variable $\xi$
and the main goal is
to reduce the risk related to
$\xi$ by appropriately
trading in the financial market.
The investment opportunities
is given abstractly through a linear
set $\hat \cA$ denoting the
set of all admissible portfolios $\pi$
with a final portfolio value $Z^\pi_T$
at time $T$.
Then,
the super-replication problem
is to minimize the cost among all portfolios that
reduces the risk related to the liability $\xi$ to zero.  Mathematically,
\begin{equation}
\label{e.sp}
V(\xi):= \inf \ \left\{\
\cL(\pi)\ :\ \exists \pi \in \hat \cA\ {\mbox{such that}}
\ Z^\pi_T \ge \xi,\ \ \P-a.s.\ \right\},
\end{equation}
where $\cL(\pi) \in \R$ is the cost of
the portfolio $\pi$.
Once a market model is fixed through
a probability measure $\P$, then all statements are
supposed to be understood $\P$-almost surely.
Hence, the only role of the probability measure $\P$ is to describe the
null sets or equivalently
all impossible future scenarios.
Any other probability measure that
is equivalent to $\P$
(i.e., any measure with the same null sets)
would yield the same super-replication cost.
This problem is studied extensively
when the market is frictionless or equivalently $\cL$ is linear
and when only the adapted dynamic trading of the stock
without constraints is considered.
Under no-arbitrage type
assumptions and mild technical integrability
conditions, the convex dual is the following
maximization problem,
$$
D(\xi):= \sup_{\Q \in \cQ}\ \E_\Q\left[\xi \right],
$$
where $\cQ$ is the set of
all ``martingale'' measures that are equivalent to $\P$.
Precise statements in continuous time are technical
and we refer the reader to the seminal paper
of Delbaen \& Schachermayer \cite{Dalbaen-Schachermayer}.

These classical results were then extended
to markets with trading frictions.  It is shown that
super-replication in markets with
(proportional) transaction costs is prohibitively costly
as first proved in
Soner Shreve \& Cvitanic \cite{SSC}
and later generalized in Leventhal \& Skorohod \cite{LS},
Cvitanic, Pham \& Touzi \cite{CPT}, Bouchard \& Touzi \cite{BT},
Jakubenas, Levental, \& Ryznar \cite{JLR}, Guasoni, Rasonyi \& Schachermayer
\cite{GRS}, Blum \cite{B}
and for the game options in Dolinsky \cite{D}.
In all of these examples,
the super-replication cost is minimized
among all ``trivial'' strategies.  Hence,
the investor does not benefit from dynamic hedging
when the objective is to super replicate with certainty.
Also in all of these examples
not the null sets of $\P$ but rather the
support of it is important.
The related question of fundamental theorem of
asset pricing and super-heding duality with a given $\P$
is studied by Schachermayer \cite{S-FTAP,S} and the references therein.

One may reduce the hedging cost by including liquid derivatives in the super-replicating
portfolio. In particular, this might be the case
for semi-static hedging which is
detailed in the next section.  Namely, the investor
is allowed to take static positions in a
finite number of options (written on the underlying asset)
with  initially known prices.
In addition to these static option positions,
the stock is also traded dynamically and
all of these trades
are subject to proportional transaction
costs.
In terms of the above notation,
the set $\hat \cA$ of admissible
portfolios is enlarged by static option
trades but the transaction
costs make the cost functional $\cL$
to be convex rather than to be linear as in the classical papers.
We refer the reader to the survey of Hobson \cite{H},
a recent paper of the authors \cite{DS2}
and the references
therein for information
on semi-static hedging in continuous time.

While the model-independent approach with semi-static hedging
received considerable attention in recent years,
there are only few results for such markets with friction.
Indeed, recently the authors proved a model independent duality result
for semi--static hedging with transaction costs in discrete time \cite{DS1}.
Again in discrete time a fundamental theorem asset pricing
was  studied in Bayraktar \& Zhang \cite{BZ} and
in Bouchard \& Nutz \cite{BN}
in markets with transaction costs.
These later papers consider the quasi-sure criterion
given by a set of probabilistic models.
To the best of our knowledge, in  continuous time
semi--static hedging with transaction costs under model uncertainty
has not yet been studied.

In this paper, we consider a continuous time financial market
which consists of one risky asset
with continuous paths. In such a financial market
we study two super--replication problems of a given (path dependent)
European option.
We assume that the dynamic hedging of the stock
as well as the static option trading
are subject to transaction fees.
In the first problem, the market
model is given through a probability measure $\P$.
Then, the optimization problem corresponds to
 a straightforward extension
of \reff{e.sp}.
The second one is the model--independent
 problem
referring to super--replication  for all
continuous stock price processes.
Namely, in \reff{e.sp}
we require the inequality $Z^\pi_T \ge \xi$
to hold not $\P$-almost surely but rather
{\em{for every possible}} stock price path.
These definitions are given in  the subsection \ref{ss.spp} below.

Our main result Theorem \ref{thm.main}
states that these two problems
described above have the same value provided that
 the distribution
$\P$ of the stock price process satisfies the conditional full support property,
see Definition \ref{dfn.2} below.
Hence, in the presence of transaction costs the knowledge of
the model does not reduce the super--replication cost.
This explains the earlier results on super-replication with
friction and why the optimal hedge in these examples are the trivial ones.

Theorem \ref{thm.main} is proved under regularity Assumptions \ref{asm2.1},
\ref{asm.regularity} and a no-arbitrage type of condition
Assumption \ref{asm.noarbitrage}, below.
However, we do not assume any admissibility conditions on the portfolio.
Furthermore, we provide a duality result for the mutual
value in terms of consistent price systems
on the space of continuous functions that are
consistent with the option prices.  This duality
is very similar to the one proved in
discrete time in \cite{DS1}.

The proof of Theorem \ref{thm.main} is completed in four major steps.
First, we reduce the problem to bounded payoffs by applying the
pathwise inequalities which were obtained in
Acciaio {\em{et.al.~}} \cite{ABPST} and earlier by
Burkholder \cite{Bu}.
In the second step, we obtain a lower
bound
for the super--replication cost in the case where the model is given.
This bound is expressed in terms of modified model--free super--replication problems
 with appropriately lowered rate of transaction costs.
The third step is to derive an upper bound for the
model--free problem.
This step is done by
applying the recent results
of Schachermayer \cite{S} together with a lifting procedure
similar to the one developed in
our earlier work \cite{DS}.
The last step is a probabilistic proof
for the equality between (the asymptotic behaviour of) the lower and the upper bounds.

The paper is organised as follows. Main results are formulated in the next section.
In Section \ref{sec2+}, we reduce the problem to bounded claims.
A lower bound for the super--replication price in a given model is obtained
in Section \ref{sec3}.
Section \ref{sec4}  derives an upper bound for the model--free super--replication price.
The last section is devoted
to the proof of the equality
between the lower and the upper bounds.
\newpage

\section{Preliminaries and main results} \label{sec2} \setcounter{equation}{0}

\subsection{Market and Notation}
\label{ss.notation}

The  financial market consists of
 a savings account
 which is normalized to unity
 $B_t\equiv 1$
by discounting and of a risky asset $S_t$, $t\in [0,T]$, where $T<\infty$ is the maturity
date. Let  $s:=S_0>0$ be the initial stock price and without loss
of generality set $s=1$. We assume that the risky asset
could be any continuous process
with this initial data.

{In the sequel, we use the following notations.
For $s \ge 0$, $t \in [0,T)$, we set}
\begin{align*}
\mathcal{C}^+_s[t,T]&:= \left\{\ f:[t,T] \to [0,\infty)\ |\
f\ {\mbox{is continuous,}}\ f(t)=s\ \right\},\\
\mathcal{C}^{+}[t,T]&:=\bigcup_{s\geq 0} \mathcal{C}^{+}_s[t,T]
\end{align*}
and for $s>0$,
\begin{align*}
\mathcal{C}^{++}_s[t,T]
&:= \left\{\ f\in \mathcal{C}^+_s[t,T]\ |\
f(u) >0, \ \forall u \in [t,T]\ \right\},\\
\mathcal{C}^{++}[t,T]&:=\bigcup_{s>0} \mathcal{C}^{++}_s[t,T].
\end{align*}
Then,
$$
\Omega:= \mathcal{C}^{++}_1[0,T]
$$
represents the
set of all possible stock prices or the probability space.
We let
$\mathbb S=(\mathbb S_t)_{0\leq t\leq T}$ be the canonical
process given by $\mathbb S_t(\omega):=\omega_t$, for all $\omega\in\Omega$ and
 $\mathbb F_t:=\sigma(\mathbb
S_s,\, 0\leq s\leq t)$ be the canonical filtration (which is not right continuous).
We say that a probability measure $\Q$ on the space $(\Omega,{\mathbb F})$ is a martingale measure,
 if the canonical process $({\mathbb S_t})_{t=0}^T$ is a martingale with respect to $\Q$.

Further we  let
$$
\D[0,T]:=\left\{\ f:[t,T] \to [0,\infty)\ |\
f\ {\mbox{is \cad}}\ \right\},
$$
be the Skorokhod space of
\cad\ functions with the usual sup-norm
$$
\|\upsilon\|:=\sup_{0\leq t \leq
T}|\upsilon_t|.
$$

\subsection{The claim and its regularity}
\label{ss.regularity}

We model the liability of the claim through a deterministic
map of the whole stock price process.  Indeed,
for a given deterministic map
$$
G:\D[0,T]\rightarrow \mathbb{R}_{+},
$$
a general path dependent European option
has the payoff
$\xi=G(S)$.
Hence, although we consider only continuous stock price processes,
we implicitly assume that the option is defined for all
bounded measurable maps.

Our regularity assumption on the payoff is the same as
the one used in \cite{DS}.
For the convenience of the reader
we briefly review this assumption, but refer to \cite{DS} for
an extended discussion and its connection with the Skorokhod metric.
In particular, all options on the running maximum and Asian type options
satisfy it.
We make
the following standing assumption on $G$.

\begin{asm}
\label{asm2.1}
We assume that there exists a constant $L>0$ satisfying,\\ 
i.
$$
|G(\omega)-
G(\tilde\omega)|\leq L \|\omega-\tilde\omega\|,  \ \ \omega,\tilde\omega\in
\cD[0,T], 
$$
 ii.~and
 $$
|G(\upsilon)-G(\tilde\upsilon)| 
\leq L \|\upsilon\|\sum_{k=1}^{n}|\Delta t_k-\Delta \tilde t_k|, 
$$ 
for every piecewise constant functions
$\upsilon,\tilde\upsilon\in \cD[0,T]$
of the form
$$
\upsilon_t=\sum_{i=0}^{n-1} v_i \chi_{[t_i,t_{i+1})}(t)+v_n \chi_{[t_n,T]}(t) \ \mbox{and} \
\tilde\upsilon_t=\sum_{i=0}^{n-1} v_i \chi_{[\tilde t_i,\tilde t_{i+1})}(t)
+v_n \chi_{[\tilde t_n,T]}(t),
$$
where $t_0=0<t_1<...<t_n<T$, $\tilde t_0=0<\tilde t_1<...<\tilde t_n<T$ are two partitions
and as usual $\Delta t_k := t_k- t_{k-1}$, $\Delta
 \tilde t_k := \tilde t_k- \tilde t_{k-1}$, 
 $\chi_A$ is the characteristic function.
 \end{asm}
\vspace{8pt}

\subsection{Static Positions} \label{ss.call}
Next we describe the assumptions on the static options.
We assume that there are $N$ many options
$$
f_1,...,f_N:\D[0,T]
\rightarrow\mathbb{R}
$$
that are initially available for static hedging.
These options
may be path dependent.  We assume that their prices
$\cL_1,...,\cL_N\in \R $ are known and
that we can take {\em{static long positions}} on these options.
In this context, short positions can also be
allowed by including the negative of the options,
but the prices of these two (option and its negative) should add up to a positive
value equaling the bid-ask spread on this option.
Set
$$\cF(S):=(1,f_1(S),...,f_N(S)) \ \  \mbox{and} \ \
{\mathcal{L}}:=(1,\mathcal L_1,...,\mathcal L_N),
$$
where the first function which is identically equal
to one stands for investment in the non risky asset and we assume that
the investor can take long or short positions only in this option.  But
as discussed before, we allow only long positions in the
other options.
Thus, a static position in the
these options is represented by $c\in\mathbb R\times\mathbb{R}^N_{+}$
indicating an investment  of a European option
with the payoff $ c \cdot \cF(S)$
for the price
\begin{equation}
\label{e.l}
{\mathcal L}(c):=c \cdot \cL,
\end{equation}
where  $\cdot$ denotes
the standard inner product of $\mathbb{R}^{N+1}$.

We assume that the static options
satisfy some regularity assumptions and one of the static options
has a super quadratic growth.
More precisely, we assume the following.

\begin{asm}
\label{asm.regularity}
Functions $f_1,...,f_{N-1}$ satisfy Assumption
\ref{asm2.1}. We also assume that
if $f_i$ is path dependent
(i.e. do not depend only on the value of the stock at the maturity)
then it is bounded.
For $i=N$, we assume that
$f_N(\omega)=q(\omega_T)$ where
$q:\mathbb{R}_{+}\rightarrow\mathbb{R}_{+}$ is a convex function
satisfying
$$
|q(x)-q(y)|\leq L |x-y|\left(1+\frac{q(x)}{x}+\frac{q(y)}{y}\right), \ \ \forall x,y>0
$$
and
\begin{equation}
\label{e.cq}
\liminf_{x\rightarrow\infty}\ \frac{q(x)}{x^2}>0.
\end{equation}
\qed
\end{asm}
Since we consider hedging under proportional transaction costs,
 it is reasonable to assume that
the options $f_1(S),...,f_N(S)$ are also
subject to transaction costs. This together with no-arbitrage
considerations (see also \cite{BZ,BN})
leads us to the following assumption.

\begin{asm}
\label{asm.noarbitrage}
There exists a martingale
measure $\mathbb Q$ on the canonical
space $(\Omega,{\mathbb F})$ such that
$$
\mathbb{E}_{\mathbb Q}\left[f_i(\mathbb S)\right]<\mathcal L_i,
\ \ \forall i=1,\ldots N,
$$
where $\mathbb{E}_{\mathbb Q}$ denotes the expectation
with respect to the
probability measure $\mathbb Q$.

\qed
\end{asm}

\begin{rem}[{\bf{Comments on the assumptions}}]
\label{r.discussion}
{\rm{
{
In this paper we assume that there are only
finitely many static options. This setup is different from the one
in \cite{DS,DS1,DS2} where we assumed that the set of static options equals to
$\{f(S_T): f:\mathbb{R}_{+}\rightarrow\mathbb{R}\}$ (and includes power options).
The present assumptions seem to be more realistic.
We still assume that we have an option with super quadratic payoff $f_N$. This is needed for reducing the problem to bounded claims and
for dealing with
the hedging and the pricing error estimates arising in our discretization procedure.
In fact, it is sufficient to include an option with super linear payoff,
however for the simplicity of computations
we assume super--quadratic growth. }
Since the main focus of this paper is the equivalence
between two different super-replication problems, we do not
seek the most general assumptions on the static options.
It is plausible that the main result holds under weaker assumptions.
In particular, {for bounded claims one might be able to avoid the use
of the quadratic option as in \cite{DS1}}.

The second assumption states that
there exist a linear pricing rule
that is consistent with the observed option data.
This implies in particular no-arbitrage in this market.
Also the strict inequality implies that
the options are subject to proportional transaction costs.
The equivalence of no-arbitrage and the existence of such measures
is in fact a
difficult question.  Only recently  several discrete time results
in this direction were proved
in \cite{BZ,BN}.

\qed

}}
\end{rem}
\subsection{Hedging with transaction costs}
\label{ss.admissible}

We continue by describing the continuous time trading with
proportional transaction costs,
in the underlying asset $S$.
Let $\kappa\in (0,1)$ be
the proportional transaction cost rate.
Denote by $\gamma_t$ the number of shares of the risky asset in the portfolio $\pi$
at moment of time $t$ before the transaction at this time.
Due to transaction costs it has to be
of bounded variation.
Hence, we assume that the process
$\gamma={\{\gamma_t\}}_{t=0}^T$ is an adapted process (to the
raw filtration generated
by the stock price process)
of bounded variation with left continuous paths with $\gamma_0=0$.
Let
\begin{equation*}
\gamma_t=\gamma^{+}_t-\gamma^{-}_t
\end{equation*}
be a decomposition of $\gamma$ into positive
and negative variations.
Namely,
$\gamma^{+}_t$
denotes the cumulative number of stocks
purchased up to time $t$
not including the transfers made at time $t$
and respectively, $\gamma^{-}_t$,
denotes the cumulative number of stocks,
 sold up to time $t$ again
 not including the transfers made at time $t$.  Let $\cA$ be the set of
 all such processes.

In this financial market,
a {\em{hedge}} is a pair $\pi=(c,\gamma)\in \hat \cA:=\mathbb R\times\mathbb{R}^N_{+} \times \cA$
and the corresponding {\em{portfolio liquidation value}} at the maturity
date $T$ is given by
\begin{eqnarray*}
Z^{\pi}_T(S)& := &c \cdot {\cF}(S)+ \left[\gamma_T -\kappa |\gamma_T|\right]\ S_T\\
&& +(1-\kappa) \int_{[0,T]}{S}_u \ d\gamma^{-}_u-
(1+\kappa)\int_{[0,T]}{S}_u \ d\gamma^{+}_u,\nonumber
\end{eqnarray*}
where the above integrals are the standard
Stieltjes integrals and ${\cF}(S)$ is as in subsection \ref{ss.call}.
Notice that the term $-\kappa |\gamma_T|\ S_T$ in the first line
is due to liquidation cost at maturity.
The {\em{cost of this portfolio}} $\pi=(c,\gamma)$ is equal to ${\mathcal L}(c)$
as defined in \reff{e.l}.

\subsection{Super--replication problems}
\label{ss.spp}

In this subsection, we
introduce two super--replication
problems. For the liability
$\xi=G(S)$, the model--free super--replication cost is defined by
$$
V_{\kappa}(G):=
\inf\left\{\cL(c): \exists \pi\in \hat \cA=\mathbb R\times\mathbb{R}^N_{+} \times \cA  \ {\mbox{so that}} \ \
Z^{\pi}_T(S)\geq G(S) \ \ \forall S\in \Omega\right\}.
$$
For the second problem, we assume that
a probability measure $\P$ on the canonical space $\Omega$ is given.
Then, the corresponding problem is
$$
V^{\mathbb P}_{\kappa}(G):=\inf\left\{\cL(c): \exists \pi\
\in \hat \cA=\mathbb R\times\mathbb{R}^N_{+} \times \cA  \ {\mbox{so that}} \ \
Z^{\pi}_T(S)\geq G(S) \ \ \mathbb P-\mbox{a.s}\right\}.
$$

The main goal of this paper is to obtain the
convex duality for these functionals and prove that
they are equal if the measure $\P$ has
conditional full support as defined in the
next subsection.

\subsection{Main Results}
In order to formulate our results we need the following definitions.  Recall that
$\mathcal{C}^{++}[t,T]$ and
the canonical space $\Omega= \mathcal{C}^{++}_1[t,T]$ are
defined in subsection \ref{ss.notation}.

\begin{dfn}
\label{dfn.1}
{\rm{Consider the sample space
$\hat\Omega:=\Omega\times \mathcal{C}^{++}[0,T]$.
Let $\hat{\mathbb S}=(\mathbb S^{(1)},\mathbb{S}^{(2)})$ be the canonical process
on $\hat\Omega$ and
$\hat{\mathbb F}_t:=\sigma(\hat{\mathbb S}_s,\, 0\leq s\leq t)$
be the canonical filtration.
A $(\kappa,{\mathcal L})$}} consistent price system
{\rm{is a probability measure $\hat{\mathbb Q}$ on  $\hat\Omega$
satisfying,}}
\begin{enumerate}
\item $\mathbb S^{(2)}$ {\rm{is a}} $\hat{\mathbb Q}$
{\rm{martingale with respect to}} $\hat{\mathbb F}$;
\item
$(1-\kappa)\mathbb{S}^{(1)}_t\leq \mathbb S^{(2)}_t\leq (1+\kappa)\mathbb S^{(1)}_t$,
$\ \hat{\mathbb Q}${\rm{--a.s.}}
\item
$\mathbb{E}_{\hat{\mathbb Q}}\left[f_i(\mathbb S^{(1)})\right]\leq\mathcal L_i$,
\ {\rm{for all}} $i=1,\ldots,N.$
\end{enumerate}
{\rm{The set of all $(\kappa, {\mathcal L})$ consistent price systems is denoted by}}
$\mathcal{M}_{\kappa,{\mathcal L}}$.
\qed
\end{dfn}

Next we recall the notion of conditional full support.
As usual, the support of a a probability measure $\mathbb{P}$
on a separable space,
denoted by $supp\ \mathbb{P}$, is defined as the minimal closed set of full measure.

\begin{dfn}
\label{dfn.2}
{\rm{We say that a probability measure $\mathbb P$ has
the}} conditional full support property
{\rm{if  for all $t\in [0,T)$
 $$
 supp \ \mathbb P(\mathbb{S}_{|[t,T]}|\mathbb{F}_t)=\mathcal C^{+}_{{\mathbb S}_t}[t,T] \ \ \mbox{a.s.}
$$
where  $\mathbb P(\mathbb{S}_{|[t,T]}|\mathbb{F}_t)$
denotes the $\mathbb{F}_t$--conditional
distribution of the $\mathcal {C}^{+}[t,T]$
valued
random variable $\mathbb{S}_{|[t,T]}$
which is the restriction of the canonical process to
$[t,T]$.}}
\end{dfn}

We are ready to state our main result.

\begin{thm}\label{thm.main}
Suppose Assumptions \ref{asm2.1}, \ref{asm.regularity},
\ref{asm.noarbitrage} hold.
Assume $0<\kappa<1/8$ and let $\mathbb P$ be a probability measure which satisfies
the conditional full support property.
Then,
$$
V^{\mathbb P}_{\kappa}(G)=V_{\kappa}(G)=\sup_{\hat{\mathbb Q}\in
\mathcal{M}_{\kappa,{\mathcal L}}}
\mathbb E_{\hat{\mathbb Q}}[G(\mathbb S^{(1)})].
$$
\end{thm}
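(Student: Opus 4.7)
\emph{Proof plan.} I will establish the three-term chain
$$\sup_{\hat{\mathbb Q}\in \mathcal M_{\kappa,\mathcal L}} \mathbb E_{\hat{\mathbb Q}}[G(\mathbb S^{(1)})] \ \le\ V^{\mathbb P}_\kappa(G) \ \le\ V_\kappa(G) \ \le\ \sup_{\hat{\mathbb Q}\in \mathcal M_{\kappa,\mathcal L}} \mathbb E_{\hat{\mathbb Q}}[G(\mathbb S^{(1)})],$$
in which only the middle inequality is trivial, since pathwise super-replication implies $\mathbb P$-a.s. super-replication. A preliminary reduction to bounded $G$ uses the super-quadratic option $f_N(\omega)=q(\omega_T)$ of Assumption \ref{asm.regularity} together with the Burkholder / Acciaio et al.~pathwise inequality: a small static long position in $f_N$ combined with a frictionless dynamic stock strategy pathwise dominates $G(S)\chi_{\{\|S\|>M\}}$, so the error in replacing $G$ by $G\wedge M$ vanishes as $M\to\infty$ in all three quantities of the chain (the dual one via condition (3) of Definition \ref{dfn.1}).

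For the lower bound on $V^{\mathbb P}_\kappa(G)$, I would discretize time on an $n$-point grid and show
$$V^{\mathbb P}_\kappa(G) \ \ge\ V^{\mathrm{disc},n}_{\kappa-\varepsilon_n}(G)-\eta_n,$$
where $V^{\mathrm{disc},n}_{\kappa'}$ is a model-free pathwise super-replication cost at transaction rate $\kappa'$ restricted to piecewise-constant grid trajectories, and $\varepsilon_n,\eta_n\to 0$. Conditional full support is the essential tool here: it forces $\mathbb P$ to charge every open set of paths, so any $\mathbb P$-a.s.\ super-replication is inherited by a dense collection of grid trajectories, with the small reduction $\kappa\mapsto\kappa-\varepsilon_n$ absorbing the error arising from oscillations of $\gamma$ and of the stock between consecutive grid times. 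For the matching upper bound on $V_\kappa(G)$, I would lift a discrete-time duality
$$V^{\mathrm{disc},n}_{\kappa}(G)\ =\ \sup_{\hat{\mathbb Q}\in\mathcal M^n_{\kappa,\mathcal L}}\mathbb E_{\hat{\mathbb Q}}[G],$$
coming from Schachermayer's strong-duality theorem \cite{S} on the grid (analogous to the discrete setting of \cite{DS1}), to continuous time via the lifting procedure of \cite{DS}; Assumption \ref{asm2.1} controls the oscillation of $G$ and of $f_1,\dots,f_{N-1}$ between grid times and yields $V_\kappa(G)\le V^{\mathrm{disc},n}_{\kappa}(G)+\eta'_n$.

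The final step is to let $n\to\infty$ in the combined bounds
$$V^{\mathbb P}_\kappa(G)\ \ge\ \sup_{\hat{\mathbb Q}\in\mathcal M^n_{\kappa-\varepsilon_n,\mathcal L}}\mathbb E_{\hat{\mathbb Q}}[G]-o(1),\qquad V_\kappa(G)\ \le\ \sup_{\hat{\mathbb Q}\in\mathcal M^n_{\kappa,\mathcal L}}\mathbb E_{\hat{\mathbb Q}}[G]+o(1),$$
and show that both suprema converge to $\sup_{\hat{\mathbb Q}\in\mathcal M_{\kappa,\mathcal L}}\mathbb E_{\hat{\mathbb Q}}[G(\mathbb S^{(1)})]$. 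This is a tightness and weak-limit argument: the quadratic growth \reff{e.cq} combined with condition (3) of Definition \ref{dfn.1} for $i=N$ provides a uniform second-moment bound that yields tightness on $\mathcal C^{+}[0,T]\times\mathcal C^{+}[0,T]$; weak continuity of $\hat{\mathbb Q}\mapsto\mathbb E_{\hat{\mathbb Q}}[G]$ and $\hat{\mathbb Q}\mapsto\mathbb E_{\hat{\mathbb Q}}[f_i(\mathbb S^{(1)})]$ follows from Assumptions \ref{asm2.1} and \ref{asm.regularity} (after the bounded-payoff reduction); and the martingale and bid-ask properties defining $\mathcal M_{\kappa,\mathcal L}$ pass to weak limits. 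The main obstacle I anticipate is the second step: turning a $\mathbb P$-a.s.\ super-replication into an almost-pathwise one with only $\varepsilon_n\to 0$ loss in the transaction rate. The hypothesis $\kappa<1/8$ enters here, giving enough slack in the cost to absorb the discretization and lifting errors while keeping $\kappa-\varepsilon_n$ bounded away from $0$ and the transaction cost on the option positions manageable.
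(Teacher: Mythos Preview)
Your four-step outline---reduction to bounded claims, lower bound via conditional full support, upper bound via Schachermayer's duality plus lifting, and an asymptotic match of the two bounds---is exactly the architecture of the paper. The reduction step and the upper-bound step are described accurately. Two points need correction.

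\textbf{The lower bound is a dual argument, not a primal one.} Your phrasing ``any $\mathbb P$-a.s.\ super-replication is inherited by a dense collection of grid trajectories, with the small reduction $\kappa\mapsto\kappa-\varepsilon_n$ absorbing the error arising from oscillations of $\gamma$'' suggests you intend to show that the super-replication \emph{inequality} persists pathwise on grid paths. This does not work: the error in passing from $Z^\pi_T(S)$ to $Z^\pi_T(\tilde S)$ for a nearby grid path $\tilde S$ involves $\int |S_u-\tilde S_u|\,d|\gamma|_u$, and the total variation $|\gamma|_T$ is not controlled a priori, so no uniform $\varepsilon_n$ can absorb it. What the paper actually does (Lemma~\ref{lem3.1}) is fix a \emph{discrete consistent price system} $\tilde{\mathbb Q}$, use conditional full support \`a la Guasoni--R\'asonyi--Schachermayer to build $\hat{\mathbb Q}\ll\mathbb P$ under which the stock hits the same $\epsilon$-level stopping times with the same distribution, and then take $\mathbb E_{\hat{\mathbb Q}}$ of the super-replication inequality. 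The transaction-cost slack $\kappa\mapsto\tilde\kappa$ is used only to replace $(1\pm\kappa)\mathbb S_u$ by $(1\pm\tilde\kappa)S_{k+1}$ on each interval $[\tau^{(\epsilon)}_k,\tau^{(\epsilon)}_{k+1}]$; after that the dynamic part collapses to a discrete stochastic integral against the shadow martingale $M_k$, whose $\hat{\mathbb Q}$-expectation is zero. It is this martingale step, not any control of $\gamma$, that kills the gains term.

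\textbf{Location of $\kappa<1/8$.} This hypothesis is not used to create slack in the discretization. It enters only in Lemma~\ref{lem2+.1}, where the pathwise Burkholder/Acciaio--et--al.\ inequality produces the factor $(1-8\kappa)/4$ in front of $\max_j S^2_{\theta_j}$; one needs $1-8\kappa>0$ for the reduction to bounded claims to go through.

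\textbf{The final convergence step.} Your proposed tightness/weak-limit argument is a genuine alternative to what the paper does. The paper instead proves directly (Lemma~\ref{lem5.1}) that $(H,\tilde\kappa)\mapsto\sup_{\hat{\mathbb Q}\in\mathcal M_{\tilde\kappa,H}}\mathbb E_{\hat{\mathbb Q}}[G(\mathbb S^{(1)})]$ is continuous, by explicitly perturbing a near-optimal $\hat{\mathbb Q}$ (rescaling the ratio $\mathbb S^{(2)}/\mathbb S^{(1)}$ and mixing with the Assumption~\ref{asm.noarbitrage} measure $\mathbb Q$). To pass from the discrete dual classes to $\mathcal M_{\kappa,\mathcal L}$ the paper then does two further explicit constructions: for the lower bound it discretizes a continuous consistent price system at level-crossing times and projects the jump times onto a grid; for the upper bound it embeds the discrete system into a Brownian filtration (Skorokhod representation) and linearly interpolates to obtain continuous paths. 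A tightness route could in principle replace this, but note that your discrete dual measures live on c\`adl\`ag step-function paths, so weak limits would a priori lie in $\mathbb D[0,T]$ rather than in $\hat\Omega=\Omega\times\mathcal C^{++}[0,T]$; you would still need an interpolation step to land in the continuous class.
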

Clearly,  $V^{\mathbb P}_{\kappa}(G)\leq V_{\kappa}(G)$.
Therefore, in order
to prove Theorem \ref{thm.main} it suffices to prove
the following two inequalities,
\begin{equation}
\label{e.lower}
V^{\mathbb P}_{\kappa}(G)\geq\sup_{\hat{\mathbb Q}\in
 \mathcal{M}_{\kappa,{\mathcal L}}}\mathbb E_{\hat{\mathbb Q}}[G(\mathbb S^{(1)})]
\end{equation}
and
\begin{equation}\label{e.upper}
V_{\kappa}(G)\leq\sup_{\hat{\mathbb Q}\in
\mathcal{M}_{\kappa,{\mathcal L}}}\mathbb E_{\hat{\mathbb Q}}[G(\mathbb S^{(1)})].
\end{equation}
The lower bound \reff{e.lower} is proved in Lemma \ref{lem6.2}
and the upper bound \reff{e.upper} is established in Lemma \ref{lem6.3}.

In the sequel, we always assume, without explicitly stating, that $0<\kappa<1/8$.

\section{Reduction to Bounded Claims}
\label{sec2+} \setcounter{equation}{0}

The following result shows that in this market
one can hedge certain claims in the tails with small
cost. {Similarly, to \cite{DS,DS1}, the proof is done by combining assumption (2.2)
and the results of \cite{ABPST}.}

\begin{lem}\label{lem2+.1}
For any $K>0$
consider the European claim
$$
\alpha_K(S):=\frac{||S||}{K}+||S||\chi_{\{||S||\geq K\}}(S), \ \ S\in \Omega,
$$
where as before $\chi_A$ is the characteristic function.
Under Assumption \ref{asm.regularity},
$$
\lim_{K\rightarrow\infty}V_{\kappa}(\alpha_K)=0.
$$
\end{lem}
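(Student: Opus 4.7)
The plan is to construct, for each $K$, an explicit semi--static super--replicating portfolio for $\alpha_K$ whose total cost is $O(1/K)$, so that $V_\kappa(\alpha_K)\to 0$ as $K\to\infty$. The starting point is the pointwise domination $\alpha_K(S)\leq \|S\|/K + \|S\|^2/K \leq (3/2)(1+\|S\|^2)/K$, obtained from $\|S\|\leq \|S\|^2/K$ on $\{\|S\|\geq K\}$ together with $2\|S\|\leq 1+\|S\|^2$. Hence it suffices to super--replicate a fixed multiple of $\|S\|^2/K$ at cost $O(1/K)$ and add a cash buffer of size $O(1/K)$.

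Second, I would invoke the pathwise Doob--type inequality of Acciaio et al.\ \cite{ABPST} (going back to Burkholder \cite{Bu}). Set $M_t:=\max_{u\leq t}S_u$. Pathwise integration by parts is well--defined because $M$ is continuous and of bounded variation; since $dM$ is supported on $\{S=M\}$ one gets $\int_0^T S_u\,dM_u=\int_0^T M_u\,dM_u=(M_T^2-1)/2$, which yields the identity $M_T^2 = 2M_T S_T - 1 - 2\int_0^T M_u\,dS_u$. Combining this with the elementary bound $2M_T S_T \leq M_T^2/2 + 2S_T^2$ gives the pathwise inequality $\|S\|^2 = M_T^2 \leq 4 S_T^2 - 4\int_0^T M_u\,dS_u - 2$. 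Setting $H_t:=-4(M_t-1)$ produces a continuous BV process with $H_0=0$, and one readily computes $|H_T|S_T + \int_0^T S_u\,|dH_u|\leq 6\|S\|^2$.

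Third, Assumption \ref{asm.regularity} provides constants $A,B>0$ with $S_T^2 \leq A + B\,q(S_T)$, so the previous step can be rewritten as $\|S\|^2 \leq A' + B'\,q(S_T) + \int_0^T H_u\,dS_u$. I would then take the semi--static hedge with static cash position $A'/K$, $B'/K$ units of $f_N$, and dynamic stock position $\gamma_t = H_t/K$ (which satisfies $\gamma_0=0$). Accounting for transaction costs as in the definition of $Z^\pi_T(S)$ in subsection \ref{ss.admissible}, the liquidation value exceeds $\|S\|^2/K - (\kappa/K)(|H_T|S_T + \int_0^T S_u\,|dH_u|) \geq (1-6\kappa)\|S\|^2/K$. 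Since $\kappa<1/8$ forces $6\kappa<3/4<1$, scaling the whole hedge by $(1-6\kappa)^{-1}$ yields a super--replication of $\|S\|^2/K$ of cost $(A'+B'\mathcal{L}_N)/((1-6\kappa)K)=O(1/K)$, and adding an additional $O(1/K)$ cash super--replicates $\alpha_K$ at the same order.

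The main obstacle I anticipate is bookkeeping. One must verify that the Stieltjes integrals $\int M\,dS$ and $\int S\,dM$ are pathwise meaningful on every continuous trajectory (using continuity and BV of $M$), and that all three transaction cost components in $Z^\pi_T(S)$---the asymmetric trading spreads and the liquidation term $\kappa|\gamma_T|S_T$---are correctly folded into the $6\kappa\|S\|^2$ bound. The constant $6$ that appears in the friction estimate is exactly what ties the argument to the standing hypothesis $\kappa<1/8$; any smaller rate suffices to keep the absorption constant $(1-6\kappa)^{-1}$ finite, but the explicit bound $\kappa<1/8$ leaves a comfortable margin that will be reused in later sections of the paper.
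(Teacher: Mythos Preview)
Your approach is correct and rests on the same core idea as the paper---the trajectorial Doob/Burkholder inequality linking $\|S\|^2$, $S_T^2$, and a short position in the running maximum, combined with Assumption~\ref{asm.regularity} to dominate $S_T^2$ by a multiple of $q(S_T)$---but the implementation differs. You work directly with the continuous running maximum $M_t=\max_{u\le t}S_u$ and the continuous monotone strategy $H_t=-4(M_t-1)$, using that $\int_0^T S_u\,dM_u$ is a bona fide Riemann--Stieltjes integral (since $M$ is continuous and increasing) and that $dM$ is carried by $\{S=M\}$. The paper instead discretizes: it introduces stopping times $\theta_k$ at unit increments of $S$, trades the piecewise-constant discrete running maximum $\gamma_t=-\max_{j\le i}S_{\theta_j}$ on $(\theta_i,\theta_{i+1}]$, and invokes the discrete trajectorial inequality of \cite{ABPST} via summation by parts; this yields $Z^\pi_T\ge\tfrac{1-8\kappa}{4}\max_j S_{\theta_j}^2$ together with $\|S\|\le 2\max_j S_{\theta_j}$.

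Your continuous route is slightly cleaner and produces the sharper friction constant $6\kappa$ rather than $8\kappa$ (so in fact $\kappa<1/6$ would already suffice for your argument, not merely $\kappa<1/8$ as you wrote). The paper's discrete route has the advantage of quoting \cite{ABPST} verbatim and keeping the trading strategy piecewise constant, which sidesteps any worry about pathwise integrals against a non-BV integrator; it also yields the specific constant $\hat C^2=64(c_q^2+c_q\mathcal L_N)$ used later in Definitions~\ref{d.measures} and~\ref{dfn4.1}.
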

\begin{proof}
Let
$$
\theta_0:=\theta_0(S)=0
$$
and for a positive integer $k$ we recursively define the stopping times by,
$$
\theta_k:=\theta_k(S)=T\wedge\inf\{t>\theta_{k-1}: |S_t-S_{\theta_{k-1}}|=1\}.
$$
Let
$\mathbb K:=\mathbb K(S)=\min\{k:\theta_k=T\}$. Clearly, $\mathbb K<\infty$ for every $S\in\Omega$.
By \reff{e.cq},
it follows that there exists $c_q>1$ such that
\begin{equation}
\label{2.100}
q(x)\geq\frac{x^2}{c_q}, \ \ \forall x\ge c_q.
\end{equation}
Consider the portfolio $\pi=(c,\gamma)$ where\
$$
\gamma_t=-\sum_{i=0}^{\mathbb K-1}
\max_{0\leq j\leq i} S_{\theta_j}\
\chi_{(\theta_i,\theta_{i+1}]}(t) , \ \qquad t\in [0,T],
$$
and
$$
c=(c_q^2,0,...,0, c_q),
$$
i.e., we buy $c_q$ many options $q(S_T)$ and invest in the riskless asset
$ c_q^2$ dollars.
By summation by parts, Proposition 2.1 in Acciaio {\em{et.al}} \cite{ABPST}
(see also Burkholder \cite{Bu})
and  \reff{2.100}, it follows that
\begin{eqnarray*}
Z^\pi_T(S)&=& c_q^2+c_q q(S_T)-
\sum_{i=0}^{\mathbb K-1}\left[\max_{0\leq j\leq i} S_{\theta_j}\right] \ (S_{\theta_{i+1}}-S_{\theta_i})\\
&&-\kappa\ \sum_{i=1}^{\mathbb K-1}S_{\theta_i}\left[
\max_{0\leq j\leq i} S_{\theta_j}-
\max_{0\leq j\leq i-1} S_{\theta_j}\right]\\
&&- \kappa S^2_0\ -\ \kappa S_T\ \left[\max_{0\leq j\leq \mathbb K-1} S_{\theta_j}\right]\\
&\geq & \frac{(1-8\kappa)}{4}\ \max_{0\leq j\leq \mathbb K} S^2_{\theta_j}.
\end{eqnarray*}
Observe that
$$
||S||\leq 1+\max_{0\leq j\leq \mathbb K} S_{\theta_j}\leq 2\max_{0\leq j\leq\mathbb K} S_{\theta_j}.
$$
Also, since for any $S \in \Omega$,
$S_0=1$, $\|S\|\ge 1$.  Hence,
$$
K\alpha_K(S) \le \|S\|+ \|S\|^2 \le 2 \|S\|^2 \le 8
\max_{0\leq j\leq \mathbb K} S_{\theta_j}^2 .
$$
Thus, {(recall that $\kappa<\frac{1}{8}$)}
$$
Z^\pi_T(S)\geq \frac{(1-8\kappa)}{4}\
\left(\max_{0\leq j\leq\mathbb K} S^2_{\theta_j}\right)
\geq \frac{K(1-8\kappa)}{32}\alpha_K(S).
$$
We conclude that the super-replication cost of $[K(1-8\kappa)/32] \ \alpha_K$ is no more than
the cost of this portfolio.  Hence,
\begin{equation}
\label{2.200}
V_{\kappa}(\alpha_K)
\leq \frac{32}{(1-8\kappa)}\ \frac{c_q^2+c_q\mathcal{L}_N}{K}
\end{equation}
and the result follows after taking $K$
to infinity.
\end{proof}

Next, we establish the reduction to bounded claims.

\begin{lem}\label{lem2+.2}
Under the assumptions of Theorem \ref{thm.main},,
it sufficient to prove Theorem \ref{thm.main} for bounded
claims.
\end{lem}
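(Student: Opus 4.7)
The plan is to approximate $G$ by the bounded truncations $G_K := G \wedge K$ and to show that both the primal quantities $V_\kappa, V^{\mathbb P}_\kappa$ and the dual quantity $\sup_{\hat{\mathbb Q}}\mathbb E_{\hat{\mathbb Q}}[G(\mathbb S^{(1)})]$ survive the passage $K \to \infty$, so that the equalities at level $K$ transfer to $G$ itself. The truncation $G_K$ is bounded, nonnegative, and inherits both parts of Assumption \ref{asm2.1} from $G$ with the same Lipschitz constant $L$, since $x \mapsto x \wedge K$ is a $1$-Lipschitz contraction of $\mathbb R_+$; hence Theorem \ref{thm.main}, assumed for bounded claims, applies to each $G_K$.

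The pathwise truncation error is controlled by the Lipschitz bound $G(S) \le G(0) + L\|S\|$: for $K > G(0)$,
\[
0 \le G(S) - G_K(S) = (G(S)-K)^+ \le L\|S\|\,\chi_{\{\|S\| \ge K'\}}(S) \le L\,\alpha_{K'}(S), \quad K':=\tfrac{K-G(0)}{L}.
\]
The model-free super-replication functional $V_\kappa$ is sub-additive: if $\pi_1$ super-replicates $f$ at cost $\mathcal L(c_1)$ and $\pi_2$ super-replicates $g$ at cost $\mathcal L(c_2)$, then $(c_1+c_2,\gamma_1+\gamma_2)$ is admissible (the static part still lies in $\mathbb R\times\mathbb R^N_+$ and $\mathcal L(c_1+c_2)=\mathcal L(c_1)+\mathcal L(c_2)$), and since the Jordan positive (resp.\ negative) variation of $\gamma_1+\gamma_2$ is dominated by $\gamma_1^++\gamma_2^+$ (resp.\ $\gamma_1^-+\gamma_2^-$), the terminal liquidation value of the combined portfolio is at least $Z^{\pi_1}_T+Z^{\pi_2}_T \ge f+g$ pathwise. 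Consequently
\[
V_\kappa(G) \le V_\kappa(G_K) + L\,V_\kappa(\alpha_{K'}), \qquad V^{\mathbb P}_\kappa(G) \le V^{\mathbb P}_\kappa(G_K) + L\,V_\kappa(\alpha_{K'}),
\]
while monotonicity in the payoff gives $V_\kappa(G_K)\le V_\kappa(G)$ and $V^{\mathbb P}_\kappa(G_K)\le V^{\mathbb P}_\kappa(G)$.

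Letting $K \to \infty$, Lemma \ref{lem2+.1} gives $V_\kappa(\alpha_{K'}) \to 0$; since $G_K \nearrow G$ pointwise, monotone convergence gives $\mathbb E_{\hat{\mathbb Q}}[G_K(\mathbb S^{(1)})] \nearrow \mathbb E_{\hat{\mathbb Q}}[G(\mathbb S^{(1)})]$ for each fixed $\hat{\mathbb Q}\in\mathcal{M}_{\kappa,\mathcal L}$, and a routine near-optimizer argument (treating the finite and infinite cases of the supremum separately) promotes this to
\[
\sup_{\hat{\mathbb Q}\in\mathcal{M}_{\kappa,\mathcal L}}\mathbb E_{\hat{\mathbb Q}}[G_K(\mathbb S^{(1)})] \nearrow \sup_{\hat{\mathbb Q}\in\mathcal{M}_{\kappa,\mathcal L}}\mathbb E_{\hat{\mathbb Q}}[G(\mathbb S^{(1)})].
\]
Sandwiching by means of the theorem at level $K$ then yields $V^{\mathbb P}_\kappa(G)=V_\kappa(G)=\sup_{\hat{\mathbb Q}}\mathbb E_{\hat{\mathbb Q}}[G(\mathbb S^{(1)})]$.

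The only nontrivial step is the sub-additivity of $V_\kappa$ under proportional transaction costs, which rests on the elementary measure-theoretic comparison between the Jordan decomposition of a sum of bounded-variation processes and the sum of the individual Jordan decompositions; everything else follows from the Lipschitz bound on $G$, Lemma \ref{lem2+.1}, and monotone convergence.
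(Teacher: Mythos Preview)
Your proof is correct and follows essentially the same approach as the paper: truncate $G$, bound the tail error by a multiple of $\alpha_{K'}$, invoke Lemma~\ref{lem2+.1}, and pass to the limit via monotone convergence on the dual side. The only cosmetic difference is the choice of truncation level (you take $G\wedge K$ and use $L\alpha_{K'}$ with $K'=(K-G(0))/L$, whereas the paper takes $G\wedge(LK+G(0))$ and uses $(G(0)+L)\alpha_K$); your version is in fact slightly more careful, since you verify that $G_K$ inherits Assumption~\ref{asm2.1} and you justify the sub-additivity of $V_\kappa$ via the Jordan-decomposition comparison, points the paper leaves implicit.
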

\begin{proof}
Let $L$ be the Lispschitz constant in
Assumption \ref{asm2.1}. For any $K\ge 1$ set
$$
G_K(S):=G(S)\wedge [LK+G(0)],
\qquad
S\in\Omega.
$$
From Assumption \ref{asm2.1}, it follows that $G(S)\leq G(0)+L\|S\|$.
Therefore, for all  $K \ge 1$,
$$
G(S)\leq G_K(S)+(G(0)+L)\alpha_{K}(S).
$$
Consequently,
$$
V_{\kappa}(G)\leq V_{\kappa}(G_K)+(G(0)+L)V_{\kappa}(\alpha_{K}),
\quad
V^{\mathbb P}_{\kappa}(G)\leq V^{\mathbb P}_{\kappa}(G_K)+(G(0)+L)V_{\kappa}(\alpha_{K}).
$$
Since $G_K$ is bounded, if
Theorem \ref{thm.main} holds for such claim,
by the monotone convergence theorem we would have
\begin{equation*}
V_{\kappa}(G)=\lim_{K\rightarrow\infty}V_{\kappa}(G_K)
=\lim_{K\rightarrow\infty}
\sup_{\mathbb Q\in \mathcal{M}_{\kappa,{\mathcal L}}}\E_{\mathbb Q}[G_K(\mathbb S^{(1)})]=
\sup_{\mathbb Q\in \mathcal{M}_{\kappa,{\mathcal L}}}\E_{\mathbb Q}[G(\mathbb S^{(1)})].
\end{equation*}
Similar identities hold for $V^{\mathbb P}_{\kappa}(G)$
as well, proving the main theorem for all claims
satisfying the Assumption \ref{asm2.1}.
\end{proof}
From now on, we will assume (without loss of generality)
that there exists a constant $K>0$ such that
$0\leq G\leq K$.

\section{Lower Bound}
\label{sec3}
\setcounter{equation}{0}
In this section we establish estimates for the lower bound
\reff{e.lower}, under the assumptions of Theorem \ref{thm.main}. We start with several definitions.

Recall that $\mathbb{D}[0,T]$
is the set of all \cad functions
$f:[0,T]\rightarrow\mathbb{R}_{+}$.
Denote by $\tilde{\mathbb S}_t$ the canonical process
(i.e., $\tilde{\mathbb S}_t(\omega):=\omega_t$)
on $\mathbb{D}[0,T]$.
As usual, we consider the Borel $\sigma$--algebra
with respect to the sup norm {(this Borel $\sigma$--algebra coincides with the one generated
by the Skorohod topology)}.
Let $\tilde{\mathbb F}_t=\sigma\{\tilde{\mathbb S}_u|u\leq t\}$ be
the canonical filtration.

Let $\epsilon>0$, $n\in\mathbb N$ and
$\cT:=\{T_1,...,T_n,T\}$ be a
partition of the interval $[0,T]$, i.e.,
$0<T_1<...<T_n<T$.
{In the sequel we shall always
assume that $\epsilon<\ln (1+1/L)$
and $\epsilon<T_{i+1}-T_i$, $i=0,1,...,n-1$.}
\begin{dfn}
\label{d.measures}
{\rm{
For any $0<\tilde\kappa<\kappa$,
let
$\mathcal{M}^{\cT,\epsilon}_{\tilde\kappa,\cL}$ be the set of all
probability measures $\tilde{\mathbb Q}$ on the space
$\mathbb D[0,T]$ satisfying,}}
\begin{enumerate}
\item
{\rm{The canonical process $\tilde{\mathbb S}$ is of the form
$$\tilde{\mathbb S}_t=\sum_{i=0}^{n-1}\tilde{\mathbb S}_{\tilde\tau^{(\epsilon)}_k}\chi_{[\tilde\tau^{(\epsilon)}_k,\tilde\tau^{(\epsilon)}_{k+1})}+
\tilde{\mathbb S}_{\tilde\tau^{(\epsilon)}_n}\chi_{[\tilde\tau^{(\epsilon)}_k,\tilde\tau^{(\epsilon)}_{n+1}]},$$
where $0=\tilde\tau^{(\epsilon)}_0\leq \tilde\tau^{(\epsilon)}_1\leq...\leq\tilde\tau^{(\epsilon)}_{n+1}=T$ and $\tilde {\mathbb S}_0=1$.
}}
\item {\rm{
For any $k\leq n$, on the event $\tilde\tau^{(\epsilon)}_{k+1}<T$ we have
$$|\ln\tilde{\mathbb S}_{\tilde\tau^{(\epsilon)}_{k+1}}-\ln\tilde{\mathbb S}_{\tilde\tau^{(\epsilon)}_{k}}|=\epsilon.$$
}}
\item {\rm{For any $1\leq k\leq n+1$, $\tilde\tau^{(\epsilon)}_k\in \cT$,
$\tilde{\mathbb Q}$-a.s.}}
\item
{\rm{There exists a  $(\tilde{\mathbb Q}, \tilde{\mathbb F})$
$c\grave{a}dl\grave{a}g$ martingale
${\{\tilde M_t\}}_{t=0}^T$
such that}}
$$
(1-\tilde\kappa)\tilde{\mathbb S}_t\leq\tilde M_t\leq (1+\tilde\kappa)\tilde{\mathbb S}_t \ \ \tilde{\mathbb Q}\mbox{-a.s.};
$$
\item
{\rm{Finally,
\begin{align*}
\E_{\tilde{\Q}}[f_i(\tilde{\mathbb S})]&
\leq
\cL_i-L \hat C(e^{4\epsilon}+\epsilon-1), \ \ i=1,...,N-1, \\
E_{\tilde{\Q}}[f_N(\tilde{\mathbb S})] & \leq
\frac{\cL_N(1-L(e^{\epsilon}-1))-L\hat C(e^{\epsilon}-1)}{1+L(e^\epsilon-1)},
\end{align*}
where $\hat C:= 8 \sqrt{c_q^2+c_q \cL_N}$, and $c_q$ is given in \reff{2.100}.
}}
\end{enumerate}
\qed
\end{dfn}
The following result provides a lower bound on the super--replication price
$V^{\mathbb P}_{\kappa}(G)$.
\begin{lem}\label{lem3.1}
Let $\mathbb P$ be a probability measure on
$\Omega$ which satisfies the conditional full support property. Assume that
\begin{equation}\label{condition}
\min\left(\frac{1+\kappa}{1+\tilde\kappa},\frac{1-\tilde\kappa}{1-\kappa}\right)
\geq e^{2\epsilon}.
\end{equation}
Then, for every partition $\cT=\{T_1,\ldots, T_n,T\}$,
$$
V^{\mathbb P}_{\kappa}(G)
\geq \sup_{\tilde{\mathbb Q}\in \mathcal{M}^{\cT,\epsilon}_{\tilde\kappa,{\mathcal L}}}
\mathbb E_{\tilde{\mathbb Q}}[G(\tilde{\mathbb S})]-L\hat C (e^{4\epsilon}+\epsilon-1).
$$
{We always use the standard convention that the supremum over the empty set is minus infinity}.
\end{lem}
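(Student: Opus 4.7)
Plan: The goal is to prove that every super-replicating strategy $\pi=(c,\gamma)\in\hat\cA$ for $G$ under $\mathbb P$ satisfies $\cL(c)\geq\mathbb{E}_{\tilde{\mathbb Q}}[G(\tilde{\mathbb S})]-L\hat C(e^{4\epsilon}+\epsilon-1)$ for any $\tilde{\mathbb Q}\in\mathcal{M}^{\cT,\epsilon}_{\tilde\kappa,\cL}$. My plan is to first derive a pathwise comparison that converts a continuous-time $\kappa$-hedge into a discrete-time $\tilde\kappa$-hedge on an $\epsilon$-log-oscillation skeleton, then lift this comparison to $\tilde{\mathbb Q}$ via the conditional full support property, and finally close with a shadow-martingale duality computation using the $\tilde M$ furnished by property (4) of Definition \ref{d.measures}.

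For the pathwise step, given $S\in\Omega$ I would set $\tau_0:=0$ and $\tau_{k+1}:=T\wedge\inf\{t>\tau_k:|\ln S_t-\ln S_{\tau_k}|=\epsilon\}$, form the skeleton $\hat S_t:=S_{\tau_k}$ on $[\tau_k,\tau_{k+1})$, and coarsen $\gamma$ to $\hat\gamma_t:=\gamma_{\tau_k}$ on $(\tau_k,\tau_{k+1}]$ so that trading only occurs at the $\tau_k$'s. Because $S_t\in[\hat S_te^{-\epsilon},\hat S_te^\epsilon]$ between consecutive stopping times, hypothesis \reff{condition} gives the bid-ask comparisons $(1+\kappa)S_t\geq(1+\tilde\kappa)\hat S_t$ and $(1-\kappa)S_t\leq(1-\tilde\kappa)\hat S_t$, i.e.\ every continuous-time $\kappa$-trade is dominated by a discrete-time $\tilde\kappa$-trade at the preceding skeleton price; integrating against $d\gamma^\pm$ and treating the terminal liquidation term analogously yields the pathwise inequality $Z^\pi_T(S)\leq Z^{\hat\pi}_T(\hat S;\tilde\kappa)$. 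Combining this with the super-replication $Z^\pi_T(S)\geq G(S)$ and the estimate $G(S)\geq G(\hat S)-L(e^\epsilon-1)\|\hat S\|$ from Assumption \ref{asm2.1}(i), together with the analogue from (ii) absorbing the jump-time perturbation appearing when evaluating the $f_i$'s, produces the pathwise bound $G(\hat S)\leq Z^{\hat\pi}_T(\hat S;\tilde\kappa)+L(e^\epsilon-1)\|\hat S\|$ on the $\mathbb P$-full measure event where super-replication holds.

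Next I would lift to $\tilde{\mathbb Q}$ using CFS: inductively in $k$, construct a probability measure on $\Omega$ that is absolutely continuous with respect to $\mathbb P$ and whose induced skeleton $(\tau_k,S_{\tau_k})$ has the same joint distribution as $(\tilde\tau^{(\epsilon)}_k,\tilde{\mathbb S}_{\tilde\tau^{(\epsilon)}_k})$ under $\tilde{\mathbb Q}$. At each induction step, CFS implies that the conditional $\mathbb P$-distribution of $S_{|[T_j,T]}$ has full support in $\mathcal{C}^+_{S_{T_j}}[T_j,T]$, so the event that the next log-oscillation time equals a prescribed $T_{j'}\in\cT$ with a prescribed sign of log-jump carries positive conditional probability, and a Radon--Nikodym reweighting realizes the target law while retaining $\mathbb P$-a.s.\ validity of super-replication. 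Taking expectation of the pathwise bound under this coupling and integrating by parts against $\tilde M$ — using $(1-\tilde\kappa)\tilde{\mathbb S}\leq\tilde M\leq(1+\tilde\kappa)\tilde{\mathbb S}$ from property (4) to absorb the bid-ask trading and liquidation terms into a martingale integral with zero expectation — gives $\mathbb{E}_{\tilde{\mathbb Q}}[Z^{\hat\pi}_T(\tilde{\mathbb S};\tilde\kappa)]\leq c_0+\sum_i c_i\mathbb{E}_{\tilde{\mathbb Q}}[f_i(\tilde{\mathbb S})]$. Property (5) of Definition \ref{d.measures} then supplies option-price bounds calibrated precisely so that the correction cancels the Lipschitz defect, with $\mathbb{E}_{\tilde{\mathbb Q}}[\|\tilde{\mathbb S}\|]$ controlled through the super-quadratic payoff $f_N$ and \reff{2.100} (which is the source of $\hat C=8\sqrt{c_q^2+c_q\cL_N}$), and the proof is complete. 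The main obstacle is the CFS-based lifting: matching the random log-oscillation stopping times to the deterministic partition $\cT$ while keeping the coupling absolutely continuous with respect to $\mathbb P$ is delicate, and is likely to require a careful inductive construction, possibly routed through a small perturbation of $\cT$ and a limiting argument.
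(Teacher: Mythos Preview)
Your overall strategy matches the paper's: build $\hat{\mathbb Q}\ll\mathbb P$ on $\Omega$ whose $\epsilon$-log-oscillation skeleton realizes the $\tilde{\mathbb Q}$-law, transfer the shadow martingale $\tilde M$ to $\Omega$, and annihilate the expected trading gain. Two points, however, need correction.

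The lifting step as you state it cannot work literally. You ask for a measure $\ll\mathbb P$ under which $(\tau_k,S_{\tau_k})$ has the same joint law as $(\tilde\tau^{(\epsilon)}_k,\tilde{\mathbb S}_{\tilde\tau^{(\epsilon)}_k})$ under $\tilde{\mathbb Q}$. But by property~(3) of Definition~\ref{d.measures} each $\tilde\tau^{(\epsilon)}_k$ takes values in the finite set $\cT$, while $\tau_k$ is a continuous-valued functional of the path; under any measure absolutely continuous with respect to $\mathbb P$ the event $\{\tau_k\in\cT\}$ has probability zero. The paper's fix is not a perturbation of $\cT$ or a limiting argument: it constructs $\hat{\mathbb Q}$ so that each $\tau_k$ lands in a window $[T_{i_k}-\epsilon/n,T_{i_k}]$, sets $\sigma_k:=\min\{t\in\cT:t\ge\tau_k\}$, and matches the law of $(S_k,\sigma_k)_k$ with that of $(\tilde{\mathbb S}_{\tilde\tau^{(\epsilon)}_k},\tilde\tau^{(\epsilon)}_k)_k$. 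The auxiliary process $\tilde S$ with jumps at the $\sigma_k$ then has exactly the $\tilde{\mathbb Q}$-law, and the residual time-shift $|\sigma_k-\tau_k|\le\epsilon/n$ is absorbed via Assumption~\ref{asm2.1}(ii); this is the source of the $+\epsilon$ and of the $e^{4\epsilon}$ (rather than your $e^\epsilon$) in the error term.

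Second, the martingale step needs more than you write. The dynamic part of $Z^\pi_T$ involves the original $\gamma$, which is adapted to the full filtration $\mathbb F$, not to the skeleton. For integration by parts against the transferred shadow $M_k=\psi_k(S_1,\ldots,S_k,\sigma_1,\ldots,\sigma_k)$ to yield zero expectation, $M$ must be a martingale with respect to $(\mathbb F_{\tau^{(\epsilon)}_k})_k$ under $\hat{\mathbb Q}$, not merely with respect to the skeleton filtration. The paper secures this via the conditional-independence property
$\hat{\mathbb Q}(S_{i+1},\sigma_{i+1}\mid\mathbb F_{\tau^{(\epsilon)}_i})
=\hat{\mathbb Q}(S_{i+1},\sigma_{i+1}\mid S_1,\ldots,S_i,\sigma_1,\ldots,\sigma_i)$,
which follows because the Radon--Nikodym reweighting is skeleton-measurable. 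Your inductive reweighting should deliver this, but it is precisely the point that legitimizes the duality computation against the non-skeleton-measurable $\gamma$, and it should be stated. Relatedly, the intermediate pathwise inequality $Z^\pi_T(S)\le Z^{\hat\pi}_T(\hat S;\tilde\kappa)$ with a coarsened $\hat\gamma$ is unnecessary and awkward to make precise on the static part and the terminal liquidation; the paper works directly with the original $\gamma$ on $\Omega$, bounds the trade cost against the \emph{forward} skeleton price $S_{k+1}$ (hence against $M_{k+1}$), and obtains $\sum_k\gamma_{\tau^{(\epsilon)}_k}(M_{k+1}-M_k)$ after summation by parts.
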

\begin{proof}
Fix, $\epsilon>0$ $\tilde \kappa$, $\cT$ as above.
{If $\mathcal{M}^{\cT,\epsilon}_{\tilde\kappa,{\cL}}=\emptyset$
then the statement is trivial.
Thus without  loss of generality we assume that $\mathcal{M}^{\cT,\epsilon}_{\tilde\kappa,{\cL}}\neq\emptyset$}.
We fix an arbitrary measure $\tilde{\mathbb Q}\in\
\mathcal{M}^{\cT,\epsilon}_{\tilde\kappa,{\cL}}$
and we will show that
\begin{equation}
\label{e.goal}
V^{\mathbb P}_{\kappa}(G)\geq
\mathbb E_{\tilde{\mathbb Q}}[G(\tilde{\mathbb S})]
-L\hat C (e^{4\epsilon}+\epsilon-1).
\end{equation}
The proof of the above inequality is completed in two steps. {In the first step we use the conditional full support
property of $\mathbb P$ and construct a consistent price system which is "close" to
$\tilde{\mathbb Q}$. In the second step we use the super--replication property and the constructed consistent
price system in order to obtain
a lower bound on the price.}
\vspace{8pt}

\noindent
\textbf{First step:} In this step,
we use the conditional full support property of
$\mathbb P$ in a similar way
to Guasoni, Rasonyi and Schachermayer \cite{GRS}.

Set
$
\tau^{(\epsilon)}_0:=\tau^{(\epsilon)}_0({\mathbb S})=0,
$
and for any positive integer $k>0$, recursively define
$$
\tau^{(\epsilon)}_k:=\tau^{(\epsilon)}_k({\mathbb S})=
T\wedge\inf \ \left\{t>\tau^{(\epsilon)}_{k-1}:
|\ln{\mathbb S}_t-\ln{\mathbb S}_{\tau^{(\epsilon)}_{k-1}}|=\epsilon\right\}
$$
where
as before we denote by $\Sb$ the canonical process on $\Omega$.
Define a random integer by,
$$
\mathbb K:=\mathbb K(\Sb)=\min\{k:\tau^{(\epsilon)}_k=T\}-1.
$$
Then, it is clear that $0\le \mathbb K <\infty$.
We also set,
$$
 S_k:=\mathbb{S}_{\tau^{(\epsilon)}_{k\wedge \mathbb K}}, \ \ 1\leq k\leq n+1,
 $$
 and
\begin{equation}
\label{e.sigmak}
 \sigma_k=\min\{t\in \cT: t\geq\tau^{(\epsilon)}_k\} .
 \end{equation}
Recall that the positive integer $n$ is the number of points in the
fixed partition $\cT=\{T_1,\ldots,T_n,T\}$.

For $\delta>0$, $i=1,...,n$ and $j=\pm1$,
let $g^{i,j}:[0,T_i]\rightarrow\mathbb{R}_{+}$,
be the linear functions satisfying
$$
g^{i,j}_0=1,\quad {\mbox{and}}
\quad
g^{i,j}_{T_i}=e^{\epsilon j}+ 2\delta j .
$$
We assume that $\delta$ is sufficiently small so that
$g^{i,j}$ is strictly positive.
Next, on $\Omega$ we define the events
\begin{eqnarray*}
A^{(j)}_i&:=&\{\
\sup_{0\leq t\leq T_i}\ |\mathbb S_t-g^{i,j}_t|<\delta\}, \qquad
i=1,...,n, \ j=\pm1\\
A^{(0)}_T&:=&
\{\sup_{0\leq t\leq T}|\ \mathbb S_t-1|<\delta\}.
\end{eqnarray*}
In view of the conditional full support property,
all of these events have non-zero
$\P$ probability.
Also, observe that for sufficiently small $\delta$,
for $i=1,\ldots,n$, $j=\pm 1$
$$
A^{(j)}_i
\subseteq B^{(j)}_i:=
\{\tau^{(\epsilon)}_1
\in  [T_i-\epsilon/n,T_i],\ \mathbb S_{\tau^{(\epsilon)}_1}
=\exp(\pm \epsilon)\}.
$$
Also $A^{(0)}_T\subset B^{(0)}_T:=\{\tau^{(\epsilon)}_1=T\}$.
Thus,  we conclude that
the events
$B^{(0)}_T, B^{(j)}_i$, $i=1,...,n$, $j=\pm 1$
have non-zero $\P$
probabilities as well.

We proceed by induction.
Assume that for a given $k\geq 1$ and any
$j_1,...,j_k=\pm 1$, $1\leq i_1<...<i_k\leq n$,
we have proved that the probability of the sets

$$
B^{(j_1,...,j_k)}_{i_1,...,i_k}
:=\bigcap_{m=1}^k
\left\{\tau^{(\epsilon)}_m\in  [T_{i_m}-\epsilon/n,T_{i_m}],
\ \mathbb S_{\tau^{(\epsilon)}_m}=\exp(\epsilon \sum_{r=1}^m j_r)\right\}
$$
and
 $$
 B^{(j_1,...,j_{k-1},0)}_{i_1,...,i_{k-1},T}:=
 \bigcap_{m=1}^{k-1}\left\{\tau^{(\epsilon)}_m\in  [T_{i_m}-\epsilon/n,T_{i_m}],
 \ \mathbb S_{\tau^{(\epsilon)}_m}
 =\exp(\epsilon \sum_{r=1}^m j_r)\right\}\bigcap
 \{\tau^{(\epsilon)}_k=T\}
 $$
have non-zero $\P$ probabilities.

Let
 $j_1,...,j_{k+1}=\pm 1$, $1\leq i_1<...<i_{k+1}\leq n$.
 On the event
 $\tau^{(\epsilon)}_k\leq T_{i_k}$ define the random,
 linear function
 $g^{i_{k+1},j_{k+1}}:[\tau^{(\epsilon)}_k,T_{i_{k+1}}]\rightarrow\mathbb{R}_{+}$
 by
$$
 g^{i_{k+1},j_{k+1}}_{\tau^{(\epsilon)}_k}=\exp(\epsilon \sum_{r=1}^{k}  j_r) \ \ \mbox{and} \ \
 g^{i_{k+1},j_{k+1}}_{T_{i_{k+1}}}=\exp(\epsilon\sum_{r=1}^{k+1}  j_r)+2\delta j_{k+1}.$$
 From the conditional full support property and
 Lemma 2.9 in Guasoni, Rasonyi and Schachermayer (2008), it follows that
 for any event $B\in\mathbb F_{\tau^{(\epsilon)}_k}$ the conditional probabilities
 $$
 \P\left(\ \sup_{\tau^{(\epsilon)}_k \leq t\leq T_{i_{k+1}}}
\left| \right.\mathbb S_t-g^{i_{k+1},j_{k+1}}_t
\left| \right.<\delta\
\left| \right. \ B^{(j_1,...,j_k)}_{i_1,...,i_k}\cap B\right)>0,
 $$
and
$$
\P\left(\ \sup_{\tau^{(\epsilon)}_k \leq t\leq T}
\left| \right.\mathbb S_t-\exp(\epsilon \sum_{r=1}^k j_r)
\left| \right.<\delta\
\left| \right. \ B^{(j_1,...,j_k)}_{i_1,...,i_k}\cap B\right)>0,
$$
provided that $\mathbb P( B^{(j_1,...,j_k)}_{i_1,...,i_k}\cap B)>0$.
Thus, similarly to the case $k=1$, for sufficiently small $\delta$
we conclude that the $\mathbb P$ probabilities of the following events
$$
B^{(j_1,...,j_{k+1})}_{i_1,...,i_{k+1}}:=\bigcap_{m=1}^{k+1}\left\{\tau^{(\epsilon)}_m
\in  [T_{i_m}-\epsilon/n,T_{i_m}],\
\mathbb S_{\tau^{(\epsilon)}_m}=\exp(\epsilon \sum_{r=1}^m j_r)\right\}
$$
and
 $$
 B^{(j_1,...,j_{k},0)}_{i_1,...,i_{k},T}:=
 \bigcap_{m=1}^{k}\left\{\tau^{(\epsilon)}_m
 \in  [T_{i_m}-\epsilon/n,T_{i_m}],\ \mathbb S_{\tau^{(\epsilon)}_m}
 =\exp(\epsilon \sum_{r=1}^m j_r)\right\}\bigcap
 \{\tau^{(\epsilon)}_{k+1}=T\}
 $$
are positive. This holds true for any $k\leq n+1$.

Recall the measure $\tilde{\mathbb Q}\in
\mathcal{M}^{\cT,\epsilon}_{\tilde\kappa,\cL}$
that was fixed at the start of the proof and the
$\sigma_k$'s defined by
\reff{e.sigmak}.
In view of the above discussion, and by using similar arguments
as in Lemma 2.4 in Guasoni, Rasonyi and Schachermayer (2008),
it follows that
there exists another probability measure
$\hat{\mathbb Q}\ll \mathbb P$ such that
the distribution of
$(S_1,...,S_{n+1},\sigma_1,...,\sigma_{n+1})$
under $\hat{\mathbb Q}$ is equal to the distribution of
$(\tilde{\mathbb S}_{\tilde\tau^{(\epsilon)}_1},\ldots,
\tilde{\mathbb S}_{\tilde\tau^{(\epsilon)}_{n+1}},
\tilde\tau^{(\epsilon)}_1,...,\tilde\tau^{(\epsilon)}_{n+1})$
under $\tilde{\mathbb Q}$, and in
addition
for any $i\leq n$, we have
\begin{equation}\label{3.1}
\hat{\mathbb Q}(\mathbb S_{i+1},\sigma_{i+1}|\mathbb{F}_{\tau^{(\epsilon)}_i})=
\hat{\mathbb Q}(\mathbb S_{i+1},\sigma_{i+1}|
\mathbb S_1,...,\mathbb S_i,\sigma_1,...,\sigma_i), \  \hat{\mathbb Q} \ \mbox{a.s.}
\end{equation}

Also observe that from our construction it follows that for any $k$,
\begin{equation}\label{3.1+}
|\sigma_k-\tau^{(\epsilon)}_k|\leq\frac{\epsilon}{n}, \ \ \hat{\mathbb Q} \ \ \mbox{a.s.}
\end{equation}
and
\begin{equation}\label{3.1++}
 S_{k+1}e^{-2\epsilon}\leq \mathbb S_t\leq S_{k+1}e^{2\epsilon}, \ \ \forall t\in[\tau^{(\epsilon)}_k,\tau^{(\epsilon)}_{k+1}]  \ \ \hat{\mathbb Q} \ \ \mbox{a.s.}
\end{equation}
Now, we arrive to the second step of the proof.
\vspace{10pt}

\textbf{Second step:}
Since $\tilde{\mathbb Q}\in \mathcal{M}^{\cT,\epsilon}_{\tilde\kappa,{\mathcal L}}$,
the definition of this set implies that there exists an associated martingale
 ${\{\tilde M_t\}}_{t=0}^T$ {which satisfies}
$$
(1-\tilde\kappa)\tilde{\mathbb S}_t\leq \tilde M_t\leq(1+\tilde\kappa)\tilde{\mathbb S}_t, \ \ t\in [0,T] \ \ \tilde{\mathbb Q} \ \ \mbox{a.s.}
$$

Then, for any $k\leq n+1$ there
exists a measurable function
$$
\psi_k:\mathbb{R}^{k}\times \cT\rightarrow \mathbb R_{+}
$$
such that
$$
\tilde{M}_{\tilde\tau^{(\epsilon)}_k}=
\psi_k(\tilde{\mathbb S}_{\tilde\tau^{(\epsilon)}_1},
\ldots,\tilde{\mathbb S}_{\tilde\tau^{(\epsilon)}_k},
\tilde\tau^{(\epsilon)}_1,...,\tilde\tau^{(\epsilon)}_k).
$$
{Moreover,
\begin{equation}\label{3.refnew}
(1-\tilde\kappa)\tilde{\mathbb S}_{\tilde\tau^{(\epsilon)}_k}\leq \tilde M_{\tilde\tau^{(\epsilon)}_k}\leq(1+\tilde\kappa)\tilde{\mathbb S}_{\tilde\tau^{(\epsilon)}_k}, \ \ k\leq n+1 \ \ \tilde{\mathbb Q} \ \ \mbox{a.s.}
\end{equation}}
Then, on $\Omega$ we define the stochastic process $M$ simply by
$$
M_k=\psi_k(S_1,...,S_k,\sigma_1,...,\sigma_k).
$$
In view of \reff{3.1} anf \reff{3.refnew}, it follows that for any $k$,
\begin{equation}\label{3.2}
\mathbb{E}_{\hat{\mathbb Q}}(M_{k+1}\ |\ \mathbb{F}_{\tau^{(\epsilon)}_k})=M_k
\end{equation}
and
\begin{equation}\label{3.3}
(1-\tilde\kappa)S_k\leq M_k\leq (1+\tilde\kappa)S_k \ \
\hat{\mathbb Q} \ \  \mbox{a.s.}
\end{equation}
Now, let $\pi=(c,\gamma)$ be a
$\mathbb P$ almost-surely super-replicating portfolio.
By \reff{condition}, \reff{3.1++}--\reff{3.3} and by summation by parts,
 it follows that
\begin{eqnarray}
\label{3.4}
&&
\mathbb{E}_{\hat{\mathbb Q}}\left(\gamma_T \mathbb S_T-
\kappa|\gamma_T|
\mathbb S_T +(1-\kappa) \int_{[0,T]}\mathbb S_u d\gamma^{-}_u-
(1+\kappa)\int_{[0,T]}\mathbb S_u d\gamma^{+}_u\right)\\
&&\nonumber
\hspace{10pt}\leq
\mathbb{E}_{\hat{\mathbb Q}}\left(\gamma_T M_{n+1}+
(1-\tilde\kappa) \sum_{k=0}^n S_{k+1}  \int_{[\tau^{(\epsilon)}_k,\tau^{(\epsilon)}_{k+1}]} d\gamma^{-}_u\right)\\
\nonumber
&&
\hspace{20pt}-
\mathbb{E}_{\hat{\mathbb Q}}
\left((1+\tilde\kappa)\sum_{k=0}^n S_{k+1}  \int_{[\tau^{(\epsilon)}_k,\tau^{(\epsilon)}_{k+1}]}
d\gamma^{+}_u\right)\\
\nonumber&&
\hspace{10pt}\leq
\mathbb{E}_{\hat{\mathbb Q}}\left(\gamma_T M_{n+1}+\sum_{k=0}^n M_{k+1}\left ( \int_{[\tau^{(\epsilon)}_k,\tau^{(\epsilon)}_{k+1}]} d\gamma^{-}_u
- \int_{[\tau^{(\epsilon)}_k,\tau^{(\epsilon)}_{k+1}]} d\gamma^{+}_u\right)\right)\\
\nonumber
&&\hspace{10pt} =\mathbb E_{\hat{\mathbb Q}}(\sum_{k=1}^n
\gamma_{\tau^{(\epsilon)}_k}(M_{k+1}-M_k))=0.
\end{eqnarray}
Next, we introduce the stochastic process ${\{\tilde S_t\}}_{t=0}^T$ by,
$$
\tilde{S}_t:=\sum_{k=0}^{n-1} S_k\chi_{[\sigma_k,\sigma_{k+1})}(t)+
S_n\chi_{[\sigma_n,T]}(t),
$$
where we set $\sigma_0=0$.
From our construction, it follows that the distribution
(on the space $\mathbb D[0,T]$) of
${\{\tilde S_t\}}_{t=0}^T$ under $\hat{\mathbb Q}$ is equal to
the distribution of $\tilde{\mathbb S}$ under $\tilde{\mathbb Q}$.
Thus,
\begin{equation}\label{3.5}
\mathbb{E}_{\hat{\mathbb Q}}G(\tilde S)=\mathbb{E}_{\tilde{\mathbb Q}}G(\tilde{\mathbb S})
\ \ \mbox{and} \ \
\mathbb{E}_{\hat{\mathbb Q}}f_i(\tilde S)=\mathbb{E}_{\tilde{\mathbb Q}} f_i(\tilde{\mathbb S}), \ \ i\leq N.
\end{equation}
{We next use the Assumption \ref{asm2.1}
and the properties \reff{3.1+}--\reff{3.1++}.
The result is the following inequalities that hold
$\hat{\mathbb Q}$ a.s.,
\begin{eqnarray}
\label{3.7}
|G(\tilde S)-G(\mathbb S)| & \leq & L(e^{4\epsilon}+\epsilon-1)
\|\tilde S\|,\\
\nonumber
|f_i(\tilde S)-f_i(\mathbb S)|
 & \leq & L(e^{4\epsilon}+\epsilon-1)\|\tilde S\|, \qquad
 \mbox{for} \ \ i\leq N-1.
\end{eqnarray}}
{
From Assumption \ref{asm.regularity} it follows that (recall that $e^{\epsilon}<\frac{L+1}{L}$) for any positive real numbers $x,y$
$$
|\ln x-\ln y|\leq \epsilon\Rightarrow q(y)\leq\frac{q(x)(1+L(e^{\epsilon}-1))+L (e^{\epsilon}-1)x}{1-L(e^{\epsilon}-1)}.$$
We conclude that
\begin{equation}\label{3.neww}
f_N(\mathbb S)\leq \frac{f_N(\tilde S)(1+L(e^{\epsilon}-1))+L (e^{\epsilon}-1)||\tilde S||}{1-L(e^{\epsilon}-1)}\ \ \hat{\mathbb Q} \ \ \mbox{a.s}.
\end{equation}
}
From (\ref{2.100}), Assumption \ref{asm.regularity} and the Doob inequality,
 it follows that
\begin{eqnarray*}
\mathbb{E}_{\hat{\mathbb Q}}[\|\tilde S\|^2] &=&
\mathbb{E}_{\tilde{\mathbb Q}}[\|\tilde{\mathbb  S}\|^2]
\leq 4 \mathbb{E}_{\tilde{\mathbb Q}}[\|\tilde M\|^2]\leq
16 \mathbb{E}_{\tilde{\mathbb Q}}[\tilde M ^2_T]\\
&\leq & 64 \mathbb{E}_{\tilde{\mathbb Q}}[\tilde{\mathbb S}^2_T]
\leq 64 [c_q^2+c_q \mathcal L_N]=\hat C^2,
\end{eqnarray*}
where the constants $\hat{C}$ and
$c_q$ are as in Definition \ref{d.measures}.
Also, the Holder inequality yields that
\begin{equation}\label{3.8}
\mathbb{E}_{\hat{\mathbb Q}}[\| \tilde S \|]\leq\hat C.
\end{equation}
Finally \reff{3.5}--\reff{3.8} and the fact that
$\tilde{\mathbb Q}\in \mathcal{M}^{\cT,\epsilon}_{\tilde\kappa,{\mathcal L}}$
implies that
$\mathbb{E}_{\hat{\mathbb Q}}f_i({\mathbb S})\leq \mathcal L_i$,
for every $i\leq N$.
Therefore, using \reff{3.4}--\reff{3.8}  and the relation
$\hat{\mathbb Q}\ll \mathbb P$ , we arrive at
$$
{\mathcal L}(c)\geq \mathbb{E}_{\hat{\mathbb Q}}
[c \cdot{f}(\mathbb S)]\geq
\mathbb{E}_{\hat{\mathbb Q}}[G(\mathbb S)]\geq
\mathbb{E}_{\tilde{\mathbb Q}}G[(\tilde{\mathbb S})]-L\hat C (e^{4\epsilon}+\epsilon-1).
$$
Since the above inequality holds for every $\P$ almost-surely
super-replicating strategy $\pi=(c,\gamma)$, this proves the
inequality \reff{e.goal} and completes the proof of this lemma.
\end{proof}

\section{Estimates for the Upper Bound} \label{sec4} \setcounter{equation}{0}
In this section we establish estimates
that will be used in the proof of the upper bound,
under the assumptions of Theorem \ref{thm.main}.

{We fix $\epsilon\in(0,\ln(1+1/L))$
and start with two definitions.}

\begin{dfn}
\label{d.DN}
{\rm{A function $F \in \mathbb{D}[0,T]$
belongs to $\mathbb{D}^{(\epsilon)}$, if it satisfies the followings,}}
\begin{enumerate}
\item $F_0=1$.
\item $F$ {\rm{is piecewise constant with jumps
at times $t_1,...,t_n$, where}}
$$
t_0=0<t_1<t_2<...<t_n<T.
$$
\item {\rm{For  any}} $k=1,...,n$, $|\ln F_{t_k}-\ln F_{t_{k-1}}|=\epsilon$.
\item{\rm{ For  any $k=1,...,n$, $t_k-t_{k-1}\in  U^{(\epsilon)}_k$,
where
$$
U^{(\epsilon)}_k:=
\left\{i\epsilon/(2^{k}): i=1,2, \ldots, \right\} \cup
\left\{\epsilon/(i2^k ) : i=1,2, \ldots, \right\},
$$
are the sets of possible differences between
two consecutive jump times. We emphasise, in the fourth condition,
the dependence
of the set $U^{(\epsilon)}_k$ on $k$.  So
as $k$ gets larger, jump times
take values in a finer grid.}}
\end{enumerate}
\qed
\end{dfn}

\begin{dfn}\label{dfn4.1}
{\rm{
For $\tilde\kappa,\Lambda>0$, let
$\mathcal{M}^{\epsilon,\Lambda}_{\tilde\kappa,{\cL}}$
be the set of all probability measures
$\tilde{\mathbb Q}$ on the space
$\mathbb D[0,T]$ such that the following holds,}}
\begin{enumerate}
\item {\rm{The probability measure $\tilde{\mathbb Q}$ is supported on the set}}
$\mathbb{D}^{(\epsilon)}$.
\item{\rm{
There exists a $c\grave{a}dl\grave{a}g$
$(\tilde{\mathbb Q},  \tilde{\mathbb F})$ martingale
${\{\tilde M_t\}}_{t=0}^T$  such that}}
$$
(1-\tilde\kappa)\tilde{\mathbb S}_t
\leq\tilde M_t
\leq (1+\tilde\kappa)\tilde{\mathbb S}_t
 \  \ \tilde{\mathbb Q} \ \ \mbox{a.s.}
$$
\item{\rm{
Let $\hat{C}$ be as in Definition \ref{d.measures}
and $L$ be as in Assumption \ref{asm2.1}.
Set}}
$$
B:=L(e^{2\epsilon}+\epsilon-1)
\frac{\hat C^2}{2(1-8\kappa)}+2L(e^{\epsilon}-1)\mathcal L_N+\epsilon
$$
{\rm{For any $i<N$,}}
$$
\mathbb E_{\tilde{\mathbb Q}}[f_i(\tilde{\mathbb S})]\leq {\mathcal L}_i+B,
$$
{\rm{and}}
$$
\mathbb E_{\tilde{\mathbb Q}}[f_N(\tilde{\mathbb S})
\wedge \Lambda(\tilde{\mathbb S}_T+1)]\leq {\mathcal L}_N+B.
$$
\end{enumerate}
\qed
\end{dfn}

The following result provides an upper bound on the model--free super--replication price
$V_{\kappa}(G)$.
\begin{lem}\label{lem4.1}
Assume that
\begin{equation}\label{condition1}
\min\left(\frac{1+\tilde\kappa}{1+\kappa},\frac{1-\kappa}{1-\tilde\kappa}\right)\geq e^{4\epsilon}.
\end{equation}
 Then
 $$
 V_{\kappa}(G)\leq \left(\sup_{\tilde{\mathbb Q}\in
 \mathcal{M}^{\epsilon,\Lambda}_{\tilde\kappa,{\cL}}}
\mathbb E_{\tilde{\mathbb Q}}[G(\tilde{\mathbb S})]\right)^{+}+L(e^{2\epsilon}+\epsilon-1)
\frac{\hat C^2}{2(1-8\kappa)}.
$$
\end{lem}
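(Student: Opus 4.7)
The plan is a discretization-duality-lifting argument, in the spirit of the strategy used in our earlier work \cite{DS}. I would introduce the auxiliary model-free super-replication value on the restricted state space $\mathbb{D}^{(\epsilon)}$,
\[
\tilde V := \inf\bigl\{\,\cL(c)\colon (c,\gamma)\in\hat\cA,\ Z^{(c,\gamma)}_T(\tilde S)\ge G(\tilde S)\ \forall\,\tilde S\in\mathbb D^{(\epsilon)}\text{ at transaction rate }\tilde\kappa\bigr\},
\]
and derive the lemma from two inequalities: a \emph{lifting} step $V_\kappa(G)\le \tilde V + L(e^{2\epsilon}+\epsilon-1)\hat C^{2}/(2(1-8\kappa))$, together with a \emph{duality} step $\tilde V \le \bigl(\sup_{\tilde{\mathbb Q}\in\mathcal M^{\epsilon,\Lambda}_{\tilde\kappa,\cL}}\E_{\tilde{\mathbb Q}}[G(\tilde{\mathbb S})]\bigr)^{+}$.

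For the lifting, I would associate to each continuous path $S\in\Omega$ its $\epsilon$-log-discretization $\tilde S^{(\epsilon)}\in\mathbb D^{(\epsilon)}$ obtained from the stopping times $\tau^{(\epsilon)}_k$ of Lemma~\ref{lem3.1} (rounded to the grids $U^{(\epsilon)}_k$). Given a discrete super-replicating strategy $(c,\gamma^d)$ at rate $\tilde\kappa$, I lift to a continuous strategy that trades only at the $\tau^{(\epsilon)}_k$, matching the $\gamma^d$ quantity on each interval. Since $S_{\tau^{(\epsilon)}_k}=\tilde S^{(\epsilon)}_{\tau^{(\epsilon)}_k}$ and condition~\eqref{condition1}, combined with the log-oscillation estimate $|\ln S_u-\ln S_{\tau^{(\epsilon)}_k}|\le 2\epsilon$ on $[\tau^{(\epsilon)}_k,\tau^{(\epsilon)}_{k+1})$, implies $(1+\kappa)S_u\le(1+\tilde\kappa)\tilde S^{(\epsilon)}_u$ and $(1-\kappa)S_u\ge(1-\tilde\kappa)\tilde S^{(\epsilon)}_u$, the continuous liquidation value dominates $Z^d_T(\tilde S^{(\epsilon)})\ge G(\tilde S^{(\epsilon)})$. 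Assumption~\ref{asm2.1}, parts (i) and (ii), then controls $|G(S)-G(\tilde S^{(\epsilon)})|$ by $L(e^{2\epsilon}+\epsilon-1)\|\tilde S^{(\epsilon)}\|$, and a second-moment estimate $\E\|\tilde S^{(\epsilon)}\|^2\le \hat C^2$, derived exactly as in the proof of Lemma~\ref{lem3.1} via Doob's inequality and the super-quadratic payoff $f_N$, yields the stated $\hat C^{2}/(2(1-8\kappa))$ error after applying Cauchy-Schwarz and the liquidity bound of Lemma~\ref{lem2+.1}.

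For the duality on $\mathbb D^{(\epsilon)}$, I would apply a Kreps-Yan/Hahn-Banach separation argument in the spirit of Schachermayer~\cite{S}. The space $\mathbb{D}^{(\epsilon)}$ is a countable union of finite-dimensional slices indexed by the number of jumps and their positions in the $U^{(\epsilon)}_k$, so the convex cone of claims super-replicable at rate $\tilde\kappa$ is closed in a tractable topology (e.g.\ weak-$*$ against appropriately weighted bounded functions), and its polar is exactly the set of consistent price systems described by Definition~\ref{dfn4.1}: a c\`adl\`ag martingale $\tilde M$ in the $\tilde\kappa$-spread, together with the static-option constraints carrying the slack $B$. The super-quadratic $f_N$ must be truncated to $f_N\wedge\Lambda(\tilde{\mathbb S}_T+1)$ to guarantee finite expectations under every candidate measure; the Lipschitz estimate in Assumption~\ref{asm.regularity} shows that this truncation, together with the time-grid rounding into $U^{(\epsilon)}_k$, contributes the remaining $2L(e^{\epsilon}-1)\cL_N+\epsilon$ to the constant $B$.

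The principal obstacle will be the lifting step: one must simultaneously verify that the stopped continuous strategy remains adapted and of bounded variation, that the transaction-cost inflation absorbs the inter-jump log-oscillation uniformly in $S$ (and not merely in expectation), and that the Lipschitz error on $G$ pairs correctly with the $L^2$-bound on $\|\tilde S^{(\epsilon)}\|$ to produce exactly $\hat C^{2}/(2(1-8\kappa))$. The $1/(1-8\kappa)$ arises from Lemma~\ref{lem2+.1} applied to control the $L^2$-norm of the continuous path on the candidate measure, after the discretized support is reinflated to $\Omega$.
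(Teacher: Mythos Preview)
Your overall architecture---discretize to $\mathbb D^{(\epsilon)}$, invoke duality there, then lift---matches the paper's. But the clean two-step split you propose,
\[
V_\kappa(G)\ \le\ \tilde V + L(e^{2\epsilon}+\epsilon-1)\frac{\hat C^2}{2(1-8\kappa)}
\qquad\text{followed by}\qquad
\tilde V \ \le\ \Bigl(\sup_{\mathcal M^{\epsilon,\Lambda}_{\tilde\kappa,\cL}}\E[G]\Bigr)^{+},
\]
does not work as stated, and the failure is exactly where the constant $B$ comes from.

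When you lift a semi-static hedge $(c,\gamma^d)$ from $\mathbb D^{(\epsilon)}$ to $\Omega$, the dynamic part is indeed absorbed by the transaction-cost slack in \eqref{condition1} (this is the computation $I\ge 0$ in the paper). But the \emph{static} part changes as well: the payoff $c\cdot\cF(\tilde S^{(\epsilon)})$ becomes $c\cdot\cF(S)$, and the discrepancy $\sum_{i\ge 1} c_i\,\bigl|f_i(S)-f_i(\tilde S^{(\epsilon)})\bigr|$ must also be super-replicated pathwise. By Assumptions~\ref{asm2.1} and~\ref{asm.regularity} this contributes an error of order $\bigl(1+\sum_i c_i\bigr)L(e^{2\epsilon}+\epsilon-1)\|S\|$ together with a term $2L(e^\epsilon-1)c_N f_N(S)$, so the total lifting cost scales with the \emph{a priori unbounded} static weights $c_i$. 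Your lifting inequality accounts only for the error on $G$ and omits the option error entirely; consequently the slack $B$ cannot arise in your duality step from ``truncation and time-grid rounding'' as you suggest---it is precisely this option-lifting error.

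The paper resolves the circularity by reversing your order of operations. It first \emph{fixes} $(c_1,\dots,c_N)\in\mathbb R^N_+$, applies Schachermayer's super-replication theorem \cite{S} on $\mathbb D^{(\epsilon)}$ to the residual claim $X=G-\sum_{i<N} c_i f_i - c_N\bigl(f_N\wedge\Lambda(\tilde{\mathbb S}_T+1)\bigr)$ with \emph{dynamic trading only}, lifts, and obtains
\[
V_\kappa(G)\ \le\ \inf_{c\ge 0}\ \sup_{\tilde{\mathbb Q}\in\mathcal M_{\tilde\kappa}}\Bigl[\E_{\tilde{\mathbb Q}}[X]+\sum_i c_i(\cL_i+B-\epsilon)\Bigr]
\ +\ L(e^{2\epsilon}+\epsilon-1)\frac{\hat C^2}{2(1-8\kappa)}.
\]
Only then is a min--max theorem (Theorem~2 of \cite{BHLP}, on the compact box $[0,K/\epsilon]^N$) invoked to interchange the $\inf$ and $\sup$; the inner infimum over $c$ is what forces $\E_{\tilde{\mathbb Q}}[f_i]\le\cL_i+B$ and produces the set $\mathcal M^{\epsilon,\Lambda}_{\tilde\kappa,\cL}$ of Definition~\ref{dfn4.1}. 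Your outline is missing this interchange step, and without it there is no mechanism to decouple the lifting error from the static weights.

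A secondary point: the residual error $L(e^{2\epsilon}+\epsilon-1)\|S\|$ is dealt with \emph{pathwise}, by purchasing the super-replicating portfolio for $\|S\|$ priced via~\eqref{2.200}. The Doob-type second-moment bound $\E\|\tilde{\mathbb S}\|^2\le\hat C^2$ you invoke from the proof of Lemma~\ref{lem3.1} is an expectation estimate under a consistent price system and belongs to the lower-bound argument; it plays no role in the lifting here.
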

Again, we use the standard convention that
the supremum over the empty set is
minus infinity.  In particular,
if $ \mathcal{M}^{\epsilon,\Lambda}_{\tilde\kappa,{\cL}}$ is empty,
then the above lemma
states that  $ V_{\kappa}(G)\leq  L(e^{2\epsilon}+\epsilon-1)
\frac{\hat C^2}{2(1-8\kappa)}$.

\begin{proof}
The proof is completed in two steps.
In the first step, we apply the results that deal with the
``classical" super--replication with proportional transaction costs.
\vspace{5pt}

\noindent
\textbf{First step:}
Since $\mathbb D^{(\epsilon)}$ is countable,
there exists a
probability measure $\tilde{\mathbb P}$
satisfying
$\tilde{\mathbb P}(\mathbb D^{(\epsilon)})=1$
and $\tilde{\mathbb P}(\{F\})>0$ for all $F\in \mathbb D^{(\epsilon)}$. Consider the filtered probability space
$(\mathbb D[0,T], {\{\tilde{\mathbb F}_t\}}_{t=0}^T,\tilde{\mathbb F}_T,\tilde{\mathbb P})$.
Denote by $\mathcal M_{\tilde\kappa}$ the set of all
consistent price systems in $\mathbb D^{(\epsilon)}$. Namely,
$\tilde{\mathbb Q}\in \mathcal M_{\tilde\kappa}$ if
$\tilde{\mathbb Q}$ is equivalent to $\tilde{\mathbb P}$
and
there exists a $c\grave{a}dl\grave{a}g$ martingale ${\{\tilde M_t\}}_{t=0}^T$ (with respect to $\tilde{\mathbb Q}$ and $\tilde{\mathbb F})$
such that
$$(1-\tilde\kappa)\tilde{\mathbb S}_t\leq\tilde M_t\leq (1+\tilde\kappa)\tilde{\mathbb S}_t \ \ \tilde{\mathbb P} \ \  \mbox{a.s.}$$
Let $X:=X(\tilde{\mathbb S})$ be random variable which is $\tilde{\mathbb F}_T$ measurable and bounded from below by a multiple of
$1+\tilde{\mathbb S}_T$.
Set
\begin{equation}\label{4.1}
c_0:=\sup_{\tilde{\mathbb Q}\in \mathcal{M}_{\tilde\kappa}}
\mathbb E_{\tilde{\mathbb Q}} [X].
\end{equation}
From Theorem 1.5 in Schachermayer \cite{S},
 it follows that there exists a predictable stochastic process of bounded variation
${\{\tilde\gamma_t\}}_{t=0}^T$ such that $\tilde\gamma_0=\tilde\gamma_T=0$ and
$$
c_0+(1-\tilde\kappa) \int_{[0,T]}{\tilde {\mathbb S}}_ud{\tilde\gamma}^{-}_u-
(1+\tilde\kappa)\int_{[0,T]}{\tilde{\mathbb S}}_ud{\tilde\gamma}^{+}_u\geq X, \ \ \tilde{\mathbb P} \ \ \mbox{a.s.}
$$
Thus, there exists a predictable
map $\tilde\gamma:\mathbb{D}^{(\epsilon)}\rightarrow\mathbb L^{\infty}[0,T]$ such that for any $F\in \mathbb D^{(\epsilon)}$
$\tilde\gamma_0(F)=\tilde\gamma_T(F)=0$ and
\begin{equation}\label{4.2}
c_0+(1-\tilde\kappa) \int_{[0,T]}F_ud{\tilde\gamma}^{-}_u(F)-
(1+\tilde\kappa)\int_{[0,T]}F_ud{\tilde\gamma}^{+}_u(F)\geq X(F),
\end{equation}
{where $\mathbb L^{\infty}[0,T]$ is the set of all bounded functions on the interval $[0,T]$}.
Next, choose $(c_1,...,c_N)\in\mathbb{R}^N_{+}$
and consider the random variable
$$X=X(\tilde{\mathbb S})=G(\tilde{\mathbb S})-\sum_{i=1}^{N-1} c_i f_i(\tilde{\mathbb S})-c_N (f_N(\tilde{\mathbb S})\wedge \Lambda (\tilde{\mathbb S}_T+1)).$$
{Recall, that in Assumption \ref{asm.regularity} we assumed
that if $f_i$ is path dependent then it is bounded. This together with the Lipschitz continuity of $f_i$, $i=1,...,N-1$ yields that $f_1(\tilde{\mathbb S}),...,f_{N-1}(\tilde{\mathbb S})$ are bounded by a multiple
of $1+\tilde{\mathbb S}_T$, and so $X$ is bounded by
a multiple
of $1+\tilde{\mathbb S}_T$ as well.}

Let $(c_0,\tilde\gamma)$ be such that \reff{4.1} and \reff{4.2} hold true.

Next, we lift the trading strategy $\tilde\gamma$
to a trading strategy on the space $\Omega$. We start with some preparations.
Recall the definition of the stopping times
$\tau^{(\epsilon)}_k:=\tau^{(\epsilon)}_k(\mathbb S)$,
$k\geq 0$, and $\mathbb K:=\mathbb K(\mathbb S)=
\min\{k:\tau^{(\epsilon)}_k=T\}-1$.

Set,
\begin{eqnarray*}
\hat \tau^{(\epsilon)}_k&:=&\sum_{i=1}^k\ \Delta \hat \tau^{(\epsilon)}_i, \ \ \mbox{where}\\
\Delta \hat \tau^{(\epsilon)}_i&=& \max\{\Delta t \in
U^{(\epsilon)}_i: \Delta t< \Delta \tau^{(\epsilon)}_i:= \tau^{(\epsilon)}_i-\tau^{(\epsilon)}_{i-1}\}.
\end{eqnarray*}
It is clear that
$0=\hat \tau^{(\epsilon)}_0<\hat \tau^{(\epsilon)}_1<...<\hat \tau^{(\epsilon)}_{\mathbb K}<T$
and $\hat \tau^{(\epsilon)}_k < \tau^{(\epsilon)}_k$ for all  $k=0,\ldots,\mathbb K$.

We now define $\Psi:\Omega\rightarrow\mathbb{D}^{(\epsilon)}$ by
$$
\Psi_t(\mathbb S):=\sum_{k=0}^{\mathbb K-1}
\ \mathbb{S}_{\tau^{(\epsilon)}_{k}} {\chi}_{[\hat \tau^{(\epsilon)}_k,\hat \tau^{(\epsilon)}_{k+1})}(t)+
\mathbb{S}_{\tau^{(\epsilon)}_{\mathbb K}} {\chi}_{[\hat \tau^{(\epsilon)}_{\mathbb K},T]}(t).$$
Finally, define the hedge $\pi=(c,\gamma)$ where $c=(c_0,c_1,...,c_N)$ and
$$
\gamma(\mathbb S):=
\sum_{k=1}^{\mathbb K}\tilde\gamma_{\hat\tau^{(\epsilon)}_k}(\Psi(\mathbb S))
{\chi}_{ (\tau^{(\epsilon)}_k,\tau^{(\epsilon)}_{k+1}]}(t).
$$

We continue
by estimating the portfolio value $Z^\pi_T(\mathbb S)$.
Set
\begin{eqnarray*}
I&:=&I(\mathbb S)=\gamma_T\mathbb S_T-\kappa|\gamma_T|\mathbb S_T+(1-\kappa)\int_{[0,T]}\mathbb S_u d\gamma^{-}_u-(1+\kappa)\int_{[0,T]}\mathbb S_u d\gamma^{+}_u\\
&&-(1-\tilde\kappa)\int_{[0,T]}\Psi_u(\mathbb S) d\tilde\gamma^{-}_u(\Psi(\mathbb S))+
(1+\tilde\kappa)\int_{[0,T]}\Psi_u(\mathbb S) d\tilde\gamma^{+}_u(\Psi(\mathbb S)).
\end{eqnarray*}
From Assumption \ref{asm.regularity}
it follows that for any $x,y>0$
$$|\ln x-\ln y|<\epsilon \Rightarrow q(x)
\geq \frac {(1-L(e^{\epsilon}-1))q(y)-L(e^{\epsilon}-1)y}{1+L(e^{\epsilon}-1)}.$$
Thus, from Assumptions \ref{asm2.1}, \ref{asm.regularity}
and \reff{4.2}, it follows that
\begin{eqnarray}
\label{4.3}
&&
Z^\pi_T(\mathbb S)-G(\mathbb S)\geq I-(G(\mathbb S)
-G(\Psi(\mathbb S)))
-\sum_{i=1}^{N} c_i(f_i(\Psi(\mathbb S))-f_i(\mathbb S))\\
&&\geq
I-L \left(1+\sum_{i=1}^{N-1} c_i\right)
\left(e^{2\epsilon}+\sum_{j=1}^\infty \epsilon 2^{-j}-1\right)
\|\mathbb S\|-L c_N(e^{\epsilon}-1) \frac{2 f_N(\mathbb S)+\|S\|}{1+L(e^{\epsilon}-1)}
\nonumber\\
\vspace*{2in}
&&\geq
 I-L \left(1+\sum_{i=1}^{N-1} c_i\right)
 \left(e^{2\epsilon}+\epsilon-1)\|\mathbb S\|- L c_N(e^{\epsilon}-1\right)
 (2 f_N(\mathbb S)+\|S\|).\nonumber
\end{eqnarray}
It remains to estimate the term $I$.
To simplify the calculations, we use the notation $\gamma=\gamma(\mathbb S)$
and  $\tilde\gamma=\tilde\gamma(\Psi(\mathbb S))$.
Then, in view of \reff{condition1},
\begin{eqnarray*}
&&\gamma_T\mathbb S_T-\kappa|\gamma_T|\mathbb S_T
+(1-\kappa)\int_{[0,T]}\mathbb S_u d\gamma^{-}_u-(1+\kappa)\int_{[0,T]}
\mathbb S_u d\gamma^{+}_u
\\
&&\hspace{8pt}
\geq \gamma_T\mathbb S_T-\kappa|\gamma_T|\mathbb S_T
+\sum_{k=1}^{\mathbb K}\mathbb S_{\tau^{(\epsilon)}_{k-1}}
\int_{[\tau^{(\epsilon)}_k,\tau^{(\epsilon)}_{k+1}]}
[(1-\tilde\kappa)d\gamma^{-}_u-(1+\tilde\kappa)d\gamma^{+}_u]
\\
&&\hspace{8pt}
= \gamma_T\mathbb S_T-\kappa|\gamma_T|\mathbb S_T
+\sum_{k=1}^{\mathbb K}\mathbb S_{\tau^{(\epsilon)}_{k-1}}
\int_{[\tau^{(\epsilon)}_k,\tau^{(\epsilon)}_{k+1}]}
[-d\gamma_u-\tilde\kappa|d\gamma_u|]\\
&&\hspace{8pt}
\geq\gamma_T\mathbb S_T-\kappa|\gamma_T|\mathbb S_T
+\sum_{k=0}^{\mathbb K-1}\Psi_{\hat\tau^{(\epsilon)}_k}(\mathbb S)
\int_{[\hat\tau^{(\epsilon)}_k,\hat\tau^{(\epsilon)}_{k+1}]}
[-d\tilde\gamma_u-\tilde\kappa|d\tilde\gamma_u|]
\\
&&\hspace{8pt}
=\gamma_T\mathbb S_T-\kappa|\gamma_T|\mathbb S_T
+(1-\tilde\kappa)\int_{[0,\hat\tau^{(\epsilon)}_{\mathbb K}]}\Psi_{u}(\mathbb S) d\tilde\gamma^{-}_u-
(1+\tilde\kappa)\int_{[0,\hat\tau^{(\epsilon)}_{\mathbb K}]}\Psi_{u}(\mathbb S) d\tilde\gamma^{+}_u
\\
&&\hspace{8pt}
\geq (1-\tilde\kappa)\int_{[0,T]}\Psi_u(\mathbb S) d\tilde\gamma^{-}_u-
(1+\tilde\kappa)\int_{[0,T]}\Psi_u(\mathbb S) d\tilde\gamma^{+}_u.
\end{eqnarray*}
Hence,
we conclude that $I\geq 0$.
We use this inequality together with \reff{2.200} and \reff{4.3}.
The result is,
\begin{eqnarray*}
V_{\kappa}(G)&\leq& {\mathcal L}(c)+ L(e^{2\epsilon}+\epsilon-1)
(1+ \sum_{i=1}^N c_i) V_{\kappa}(\|\mathbb S\|)+
2 L(e^{\epsilon}-1)c_N V_{\kappa}(f_N(\mathbb S)) \\
&\leq & {\mathcal L}(c)+ L(e^{2\epsilon}+\epsilon-1)
\frac{\hat C^2}{2(1-8\kappa)}(1 +\sum_{i=1}^N c_i) +
2 L(e^{\epsilon}-1)c_N \mathcal L_N.
\end{eqnarray*}
This together with \reff{4.1} yields
\begin{equation}
\label{4.5}
V_{\kappa}(G)\leq\inf_{c_1,...,c_N\geq 0}\sup_{\tilde{\mathbb Q}\in \mathcal{M}_{\tilde\kappa}}\left(\mathbb E_{\tilde{\mathbb Q}}[\ \xi \ ]+\sum_{i=1}^N c_i A_i\right)
+L(e^{2\epsilon}+\epsilon-1)
\frac{\hat C^2}{2(1-8\kappa)},
\end{equation}
where
\begin{eqnarray*}
\xi&:=& G(\tilde{\mathbb S})-
\sum_{i=1}^{N-1} c_i f_i(\tilde{\mathbb S})-c_N (f_N(\tilde{\mathbb S})\wedge \Lambda(\tilde{\mathbb S}_T+1)),\\
A_i&:=&\mathcal L_i+ L(e^{2\epsilon}+\epsilon-1)
\frac{\hat C^2}{2(1-8\kappa)}+2L(e^{\epsilon}-1)\mathcal L_N
=\mathcal L_i+B-\epsilon, \ \ i\leq N.
\end{eqnarray*}

\textbf{Second Step:}
The next step is to interchange
 the order of the infimum and supremum in \reff{4.5}.
 Consider the compact set
 $H:=[0,K/\epsilon]^N$, where recall $K$ is satisfying $G\leq K$.
 Define the function
 $\mathcal G:H\times \mathcal{M}_{\tilde\kappa}\rightarrow\mathbb R$
 by
 $$
 \mathcal G(h,\tilde{\mathbb Q})=
 \mathbb E_{\tilde{\mathbb Q}}\left[
 G(\tilde{\mathbb S})-\sum_{i=1}^{N-1}
 h_i f_i(\tilde{\mathbb S})
 -h_N (f_N(\tilde{\mathbb S})\wedge \Lambda(\tilde{\mathbb S}_T+1))
 \right]
 +\sum_{i=1}^N h_i A_i,$$
 where $h=(h_1,...,h_N)$.
Notice that $\mathcal G$ is affine in each of the variables, and continuous in the
first variable.
The set $ \mathcal{M}_{\tilde\kappa}$ can be naturally considered as a subset of the vector space
$\mathbb{R}^{\mathbb{D}^{(\epsilon)}}$. Let us show that
$ \mathcal{M}_{\tilde\kappa}$ is a convex set. Let
$\tilde{\mathbb Q}_1,\tilde{\mathbb Q}_2\in \mathcal{M}_{\tilde\kappa}$ and let $\lambda\in (0,1)$.
Consider the measure $\tilde{\mathbb Q}=\lambda \tilde{\mathbb Q}_1+(1-\lambda)\tilde{\mathbb Q}_2$.
For $i=1,2$ let ${\{\tilde M^{(i)}_t\}}_{t=0}^T$ be a martingale with respect
to $\tilde{\mathbb Q}_i$ and $\tilde{\mathbb F}$,
such that
$$(1-\tilde\kappa)\tilde{\mathbb S}_t\leq\tilde M^{(i)}_t\leq (1+\tilde\kappa)\tilde{\mathbb S}_t \ \ \tilde{\mathbb P} \ \  \mbox{a.s.}$$
Define the stochastic process
$$
\tilde M_t=\lambda \tilde M^{(1)}_t\
\left[
\frac{d \tilde{\mathbb Q}_1}{d \tilde{\mathbb Q}}|{\tilde{\mathbb F}_t}\right]
+ (1-\lambda)\tilde  M^{(2)}_t\
\left[ \frac{d \tilde{\mathbb Q}_2}{d \tilde{\mathbb Q}}|{\tilde{\mathbb F}_t}\right],
\ \ t\in [0,T].
$$
Clearly, ${\{\tilde M_t\}}_{t=0}^T$ is a martingale with respect
to $\tilde{\mathbb Q}$ and $\tilde{\mathbb F}$. Also, since
$\tilde M_t$ is a (random) convex combination of
$\tilde M^{(1)}_t$ and $\tilde M^{(2)}_t$,
$$
(1-\tilde\kappa)\tilde{\mathbb S}_t\leq\tilde M_t\leq (1+\tilde\kappa)\tilde{\mathbb S}_t \ \ \tilde{\mathbb P} \ \  \mbox{a.s.}
$$
Hence, $\tilde{\mathbb Q}\in \mathcal{M}_{\tilde\kappa}$,.
This proves that $\mathcal{M}_{\tilde\kappa}$ is a convex set.
Next, we apply the min--max theorem,  Theorem 2,
in  Beiglb\"ock, Henry-Labord\`ere and Penkner \cite{BHLP}
to $\mathcal G$.  The result is,
$$
\inf_{h\in H}\sup_{\tilde{\mathbb Q}\in \mathcal{M}_{\tilde\kappa}}
\mathcal G(h,\tilde{\mathbb Q})=
\sup_{\tilde{\mathbb Q}\in \mathcal{M}_{\tilde\kappa}}\inf_{h\in H} \mathcal G(h,\tilde{\mathbb Q})\leq
\sup_{\tilde{\mathbb Q}\in \mathcal{M}_{\tilde\kappa}} \mathcal G(h^{\tilde{\mathbb Q}},\tilde{\mathbb Q}),
$$
where
$$
h^{\tilde{\mathbb Q}}_i=\frac{K}{\epsilon}{\chi}_{
\left\{\mathbb{E}_{\tilde{\mathbb Q}}[f_i({\tilde{\mathbb S}})]\geq \mathcal L_i+B\right\}},
\ \ i\leq N-1, \ \
h^{\tilde{\mathbb Q}}_N=\frac{K}{\epsilon}{\chi}_{\left\{\mathbb{E}_{
\tilde{\mathbb Q}}
[f_N({\tilde{\mathbb S}})\wedge\Lambda(\tilde{\mathbb S}_T+1)]\geq \mathcal L_N+B
\right\}}.
$$
The definitions of $h^{\tilde{\mathbb Q}}$, the set
$\mathcal{M}^{\epsilon,\Lambda}_{\tilde\kappa,{\cL}}$ and the fact that $G\leq K$
implies that
$$
\mathcal G(h^{\tilde{\mathbb Q}},\tilde{\mathbb Q}) \le 0,
\qquad \forall \ \tilde{\mathbb Q}\in \mathcal{M}_{\tilde\kappa}\
{\mbox{but}}\ \tilde{\mathbb Q} \not \in \mathcal{M}^{\epsilon,\Lambda}_{\tilde\kappa,{\cL}}.
$$
In particular, $\sup_{\tilde{\mathbb Q}\in \mathcal{M}_{\tilde\kappa}}
\mathcal G(h^{\tilde{\mathbb Q}},\tilde{\mathbb Q})
\leq 0$, if the set
$\mathcal{M}^{\epsilon,\Lambda}_{\tilde\kappa,{\cL}}$ is empty.
These together with \reff{4.5} implies that
\begin{align*}
V_{\kappa}(G)&\leq
\sup_{\tilde{\mathbb Q}\in \mathcal{M}_{\tilde\kappa}}
\mathcal G(h^{\tilde{\mathbb Q}},\tilde{\mathbb Q})+L(e^{2\epsilon}+\epsilon-1)
\frac{\hat C^2}{2(1-8\kappa)}\\
&\leq
\left(\sup_{\tilde{\mathbb Q}\in \mathcal{M}^{\epsilon,\Lambda}_{\tilde\kappa,{\cL}}}
\mathbb E_{\tilde{\mathbb Q}}[G(\tilde{\mathbb S})]\right)^{+}+L(e^{2\epsilon}+\epsilon-1)
\frac{\hat C^2}{2(1-8\kappa)}.
\end{align*}
\end{proof}

\section{Asymptotical Analysis of the Bounds}
\label{sec5} \setcounter{equation}{0}
In this section. we complete the proof of Theorem \ref{thm.main}.
This is achieved by proving that
the lower and the upper bounds from Sections \ref{sec3}
and \ref{sec4} are asymptotically equal to each other.

Recall the probability measure
$\mathbb Q$ from Assumption \ref{asm.noarbitrage}.
Set,
$D_i=\mathbb E_{\mathbb Q}[f_i(\mathbb S)]$, $i\leq N$. Denote
$\mathbf D=\prod_{i=1}^N (D_i,\infty)$.
Let $H=(H_1,...,H_N)\in\mathbf D$ and let $\tilde\kappa\in (0,1)$. Define
$\mathcal{M}_{\tilde\kappa,H}$ to be the set of all probability measures on
$\hat\Omega:=\Omega\times \mathcal{C}^{++}_{[0,T]}$
which satisfy the conditions of Definition \ref{dfn.1},
with $\kappa,\mathcal L_1,...,\mathcal L_N$
replaced by $\tilde\kappa, H_1,...,H_N$. {Observe that $\mathbb Q\in\mathcal{M}_{\tilde\kappa,H}$ and so, the
set $\mathcal{M}_{\tilde\kappa,H}$ is not empty.}
Define the function
$\Gamma:\mathbf D\times (0,1)\rightarrow\mathbb R$ by
 $$
 \Gamma(H,\tilde\kappa):=\sup_{\hat{\mathbb Q}\in
 \mathcal{M}_{\tilde\kappa,H}}\mathbb E_{\hat{\mathbb Q}}[G(\mathbb S^{(1)})],
 $$
where, recall the canonical process
 $\hat{\mathbb S}=(\mathbb S^{(1)}_t,\mathbb{S}^{(2)}_t)_{0\leq t\leq T}$ given
 in Definition \ref{dfn.1}. The following lemma is central
  in the analysis of the asymptotic behaviour of the bounds.

 \begin{lem}\label{lem5.1}
 The function $\Gamma:\mathbf D\times (0,1)\rightarrow\mathbb R$ is continuous.
 \end{lem}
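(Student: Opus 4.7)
My plan is to prove the continuity of $\Gamma$ by establishing upper and lower semicontinuity separately at an arbitrary point $(H,\tilde\kappa)\in\mathbf D\times(0,1)$. Two ingredients will play a central role. First, the quadratic lower bound \reff{2.100} on $q=f_N$ combined with the constraint $\mathbb E_{\hat{\mathbb Q}}[f_N(\mathbb S^{(1)})]\le H_N$ and Doob's maximal inequality applied to the continuous martingale $\mathbb S^{(2)}$ yields uniform $L^2$-control of $\|\mathbb S^{(2)}\|$, hence (via the a.s.\ spread condition) of $\|\mathbb S^{(1)}\|$ as well. Second, the diagonal lift $\bar{\mathbb Q}$ of the martingale measure $\mathbb Q$ of Assumption \ref{asm.noarbitrage} (that is, the image of $\mathbb Q$ under $\omega\mapsto(\omega,\omega)$) belongs to $\mathcal M_{\tilde\kappa',H'}$ for every $\tilde\kappa'>0$ and every $H'\in\mathbf D$, with strict slack $\mathbb E_{\bar{\mathbb Q}}[f_i(\mathbb S^{(1)})]=D_i<H_i$; this provides a ``safe'' element to blend with any near-optimiser. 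By Lemma \ref{lem2+.2} I may also assume that $G$ is bounded and continuous.

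For upper semicontinuity I take $(H^{(n)},\tilde\kappa_n)\to(H,\tilde\kappa)$ and pick $\hat{\mathbb Q}_n\in\mathcal M_{\tilde\kappa_n,H^{(n)}}$ whose $G$-expectations converge to $\limsup_n\Gamma(H^{(n)},\tilde\kappa_n)$. The uniform $L^2$ bounds above give tightness of $\{\hat{\mathbb Q}_n\}$ on $\hat\Omega$. Extracting a weakly convergent subsequence $\hat{\mathbb Q}_n\Rightarrow\hat{\mathbb Q}$, the martingale property of $\mathbb S^{(2)}$ passes to the limit by uniform integrability; the a.s.\ spread $(1-\tilde\kappa_n)\mathbb S^{(1)}\le\mathbb S^{(2)}\le(1+\tilde\kappa_n)\mathbb S^{(1)}$ is preserved as a closed condition under the joint limit $\tilde\kappa_n\to\tilde\kappa$; and the moment constraints $\mathbb E[f_i]\le H_i^{(n)}$ pass via continuity of $f_i$ together with the $L^2$ uniform integrability. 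Hence $\hat{\mathbb Q}\in\mathcal M_{\tilde\kappa,H}$, and bounded continuity of $G$ gives $\mathbb E_{\hat{\mathbb Q}_n}[G]\to\mathbb E_{\hat{\mathbb Q}}[G]\le\Gamma(H,\tilde\kappa)$.

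For lower semicontinuity I fix $\hat{\mathbb Q}^\star\in\mathcal M_{\tilde\kappa,H}$ within $\delta$ of the supremum and set $\hat{\mathbb Q}^{(\lambda)}:=(1-\lambda)\hat{\mathbb Q}^\star+\lambda\bar{\mathbb Q}$. This convex combination preserves both the martingale property of $\mathbb S^{(2)}$ and the a.s.\ spread at level $\tilde\kappa$, and satisfies $\mathbb E_{\hat{\mathbb Q}^{(\lambda)}}[f_i]\le(1-\lambda)H_i+\lambda D_i$. Choosing $\lambda_n\to 0$ slowly enough that $(1-\lambda_n)H_i+\lambda_n D_i\le H_i^{(n)}$ (possible because $D_i<H_i$ and $H_i^{(n)}\to H_i$) handles the moment constraints. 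When $\tilde\kappa_n\ge\tilde\kappa$, the spread condition at level $\tilde\kappa_n$ is weaker than at $\tilde\kappa$, so $\hat{\mathbb Q}^{(\lambda_n)}\in\mathcal M_{\tilde\kappa_n,H^{(n)}}$, and bounded convergence gives $\liminf_n\Gamma(H^{(n)},\tilde\kappa_n)\ge\Gamma(H,\tilde\kappa)-\delta$.

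The hard part will be the case $\tilde\kappa_n<\tilde\kappa$, since the a.s.\ spread tightens and the convex-combination trick does not shrink the pointwise spread. My plan is to further modify $\hat{\mathbb Q}^{(\lambda_n)}$ by replacing $\mathbb S^{(2)}$ with a nearby martingale $\tilde M^{(n)}$ lying within $\tilde\kappa_n$ of $\mathbb S^{(1)}$, built via a coupling of $\hat{\mathbb Q}^{(\lambda_n)}$ with $\bar{\mathbb Q}$ on an enlarged probability space (where under $\bar{\mathbb Q}$ the process $\mathbb S^{(1)}$ is itself a martingale) and an interpolation parameter tuned so that the pointwise spread meets the new bound. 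The main technical obstacle is to carry this through while simultaneously preserving the martingale property in the joint filtration, meeting the moment constraints, and perturbing the $\mathbb S^{(1)}$-marginal only slightly, since $G$ depends only on $\mathbb S^{(1)}$ and $\mathbb E[G(\mathbb S^{(1)})]$ must remain approximately unchanged.
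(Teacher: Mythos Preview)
Your proposal has two genuine gaps.

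\textbf{Upper semicontinuity.} The assertion that uniform $L^2$ bounds on $\|\mathbb S^{(1)}\|$ and $\|\mathbb S^{(2)}\|$ give tightness of $\{\hat{\mathbb Q}_n\}$ on $\hat\Omega$ is not justified and is in fact false. Tightness in $C[0,T]$ requires control of the modulus of continuity, not merely of the sup norm; for example, the continuous martingales $M^{(n)}_t=B_{t^n}$ on $[0,1]$ all satisfy $\mathbb E[\|M^{(n)}\|^2]\le 4$ but are not tight. Nothing in the definition of $\mathcal M_{\tilde\kappa_n,H^{(n)}}$ constrains the oscillation of $\mathbb S^{(1)}$ or $\mathbb S^{(2)}$. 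A further issue is that even if a weak limit existed, you would still have to argue that it is supported on strictly positive paths, since $\hat\Omega=\mathcal C^{++}_1[0,T]\times\mathcal C^{++}[0,T]$ is not closed in $C[0,T]^2$.

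\textbf{Lower semicontinuity for $\tilde\kappa_n<\tilde\kappa$.} You correctly identify this as the crux, but what you offer is only a plan, and the obstacle you name (building a nearby martingale $\tilde M^{(n)}$ inside the tighter tube while keeping the martingale property in the joint filtration and barely perturbing the $\mathbb S^{(1)}$-marginal) is precisely the point that needs a concrete construction. Convex combination with $\bar{\mathbb Q}$ does nothing for the pointwise spread, and ``coupling on an enlarged space and interpolating'' is not yet an argument.

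The paper bypasses both difficulties by a single direct construction that yields a local modulus of continuity, hence continuity outright, without any compactness argument or case distinction. The key idea is to modify $\mathbb S^{(1)}$ rather than $\mathbb S^{(2)}$. Given a near-optimiser $\hat{\mathbb Q}_1\in\mathcal M_{\tilde\kappa_1,H^{(1)}}$, set $\rho_t:=\mathbb S^{(2)}_t/\mathbb S^{(1)}_t$, truncate it to the target tube $\dot\rho_t:=(1-\tilde\kappa_2)\vee(\rho_t\wedge(1+\tilde\kappa_2))$, and define
\[
\dot{\mathbb S}^{(1)}_t:=\frac{\mathbb S^{(2)}_t}{\dot\rho_t}\,\frac{\dot\rho_0}{\rho_0},\qquad
\dot{\mathbb S}^{(2)}_t:=\frac{\dot\rho_0}{\rho_0}\,\mathbb S^{(2)}_t.
\]
Then $\dot{\mathbb S}^{(2)}$ is still a martingale (it is $\mathbb S^{(2)}$ times an $\hat{\mathbb F}_0$-measurable factor), the ratio $\dot{\mathbb S}^{(2)}/\dot{\mathbb S}^{(1)}=\dot\rho$ lies in $[1-\tilde\kappa_2,1+\tilde\kappa_2]$ by construction, $\dot{\mathbb S}^{(1)}_0=1$, and $\sup_t|\ln\dot{\mathbb S}^{(1)}_t-\ln\mathbb S^{(1)}_t|\le C|\tilde\kappa_1-\tilde\kappa_2|$. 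Pushing forward gives $\hat{\mathbb Q}_2$ satisfying the spread at level $\tilde\kappa_2$ with $|\mathbb E_{\hat{\mathbb Q}_2}[G]-\mathbb E_{\hat{\mathbb Q}_1}[G]|$ and $|\mathbb E_{\hat{\mathbb Q}_2}[f_i]-\mathbb E_{\hat{\mathbb Q}_1}[f_i]|$ controlled by $|\tilde\kappa_1-\tilde\kappa_2|$ via Assumptions \ref{asm2.1}--\ref{asm.regularity}. A subsequent mixture $\sqrt\Lambda\,(\mathbb Q\otimes\mathbb Q)+(1-\sqrt\Lambda)\hat{\mathbb Q}_2$ with $\Lambda=\sum_k|H^{(1)}_k-H^{(2)}_k|+|\tilde\kappa_1-\tilde\kappa_2|$ then forces the moment constraints at level $H^{(2)}$, exactly as in your lower-semicontinuity step. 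This single estimate, applied with the roles of $(H^{(1)},\tilde\kappa_1)$ and $(H^{(2)},\tilde\kappa_2)$ interchanged, gives continuity.

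In short: do not try to tighten the tube around a fixed $\mathbb S^{(1)}$ by moving the martingale; instead keep the martingale and move $\mathbb S^{(1)}$ to $\mathbb S^{(2)}/\dot\rho$.
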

\begin{proof}
It suffices to prove
 that for any compact set $J\subset\mathbf D\times (0,1)$ there exists a
a continuous function $m_J:\mathbb {R}_{+}\rightarrow\mathbb{R}_{+}$
(modulus of continuity)
with $m_J(0)=0$ such that for any $(H^{(i)},\tilde\kappa_i)\in J$, $i=1,2$
$$
\Gamma(H^{(1)},\tilde\kappa_1)-\Gamma(H^{(2)},\tilde\kappa_2)
\leq m_J\left(\sum_{k=1}^N |H^{(1)}_k-H^{(2)}_k|+|\tilde\kappa_1-\tilde\kappa_2|\right).
$$

Choose $\epsilon>0$.
There exists $\hat{\mathbb Q}_1\in \mathcal{M}_{\tilde\kappa_1,H^{(1)}}$
such that
\begin{equation}\label{5.1}
\Gamma(H^{(1)},\tilde\kappa_1)<\epsilon
+\mathbb E_{\hat{\mathbb Q}_1}[G(\mathbb S^{(1)})].
\end{equation}
On the space $\hat\Omega$, define the stochastic processes $\rho$
and $\dot{\rho}$ by,
$$
\rho_t:=\frac{\mathbb {S}^{(2)}_t}{\mathbb{S}^{(1)}_t} \ \ \mbox{and} \ \
\dot{\rho}_t:=(1-\tilde\kappa_2)\vee (\rho_t\wedge{(1+\tilde\kappa_2)}), \ \ t\in [0,T].
$$
Next, introduce the stochastic process
$\dot{\mathbb S}=(\dot{\mathbb S}^{(1)}_t,\dot{\mathbb{S}}^{(2)}_t)_{0\leq t\leq T}$ by
$$
\dot{\mathbb S}^{(1)}_t:=\frac{{\mathbb S}^{(2)}_t}{\dot{\rho_t}}\frac{\dot{\rho_0}}{\rho_0}
=\frac{\rho_t}{\dot{\rho}_t}\frac{\dot{\rho}_0}{\rho_0}
{\mathbb S}^{(1)}_t \ \ \mbox{and} \ \
\dot{\mathbb S}^{(2)}_t:=\frac{\dot{\rho_0}}{{\rho}_0}{\mathbb S}^{(2)}_t, \ \ t\in [0,T].
$$
{Observe that there exists a constant $C^{(1)}_J$ such that
\begin{equation}\label{5.new}
\sup_{0\leq t\leq T}|\ln\dot{\mathbb S}^{(1)}_t-\ln {\mathbb S}^{(1)}_t|=
\sup_{0\leq t\leq T}
|\ln\rho_t+\ln \dot\rho_0-\ln\dot\rho_t-\ln\rho_0|
\leq
C^{(1)}_J |\tilde\kappa_1-\tilde\kappa_2|.
\end{equation}
Without loss of generality we assume that
$C^{(1)}_J |\tilde\kappa_1-\tilde\kappa_2|<\ln(1+1/L)$.}

{The idea behind the definition of the process $\dot{\mathbb S}$ is to construct a stochastic process
which is "close" to $\mathbb S$
and satisfy properties (1) and (2) of Definition \ref{dfn.1}, for
$\tilde\kappa_2$ instead of $\tilde\kappa_1$. In addition we require that
$\dot{\mathbb S}^{(1)}_0=1$.
Indeed, observe that $\dot{\mathbb S}:\hat\Omega\rightarrow\hat\Omega$. Thus, define
the probability measure $\hat{\mathbb Q}_2$ to be the distribution of
$\dot{\mathbb S}$ under the probability measure $\hat{\mathbb Q}_1$.
Namely,
 $\hat{\mathbb Q}_2$ is a probability
measure
on $\hat\Omega$ which is given by
$\hat{\mathbb Q}_2(A)=\hat{\mathbb Q}_1(\dot{\mathbb S}^{-1}(A))$
for any Borel set $A\subset\hat\Omega$.}
 Clearly, for any $t\in [0,T]$
$$(1-\tilde\kappa_2)\dot{\mathbb S}^{(1)}_t\leq \dot{\mathbb S}^{(2)}_t\leq (1+\tilde\kappa_2)\dot{\mathbb S}^{(1)}_t, \ \ \hat{\mathbb Q}_1 \ \  \mbox{a.s.}$$
and
$$\mathbb E_{\hat{\mathbb Q}_1}({\dot{\mathbb S}}^{(2)}_T|\dot{\mathbb S}_u, \ u\leq t)={\dot{\mathbb S}}^{(2)}_t.$$
Thus, for any $t\in [0,T]$,
\begin{equation}\label{5.1+}
(1-\tilde\kappa_2){\mathbb S}^{(1)}_t\leq {\mathbb S}^{(2)}_t\leq (1+\tilde\kappa_2){\mathbb S}^{(1)}_t, \ \ \hat{\mathbb Q}_2 \ \  \mbox{a.s.}
\end{equation}
and
\begin{equation}\label{5.1++}
\mathbb E_{\hat{\mathbb Q}_2}({{\mathbb S}}^{(2)}_T|\hat{\mathbb F}_t)={{\mathbb S}}^{(2)}_t.
\end{equation}
Next, similarly to \reff{3.8} we obtain that
there exists a constant $C^{(2)}_J$ such that
$$\mathbb E_{\hat{\mathbb Q}_1}[||\mathbb S^{(1)}||]\leq C^{(2)}_J.$$

{By applying Assumptions \ref{asm2.1}--\ref{asm.regularity} in a similar way to
 \reff{3.7}--\reff{3.neww}, and
using (\ref{5.new})
we obtain that we can construct another constant $C^{(3)}_J$ satisfying,
\begin{align}
\nonumber
|\mathbb E_{\hat{\mathbb Q}_2}[G(\mathbb S^{(1)})]-
\mathbb E_{\hat{\mathbb Q}_1}[G(\mathbb S^{(1)})]| & =
|\mathbb E_{\hat{\mathbb Q}_1}[G(\dot{\mathbb S}^{(1)})]-
\mathbb E_{\hat{\mathbb Q}_1}[G(\mathbb S^{(1)})]|\\
\label{5.2}
&\leq L  C^{(2)}_J (\exp(C^{(1)}_J |\tilde\kappa_1-\tilde\kappa_2|)-1)\\
&\leq
C^{(3)}_J|\tilde\kappa_1-\tilde\kappa_2|\nonumber\\
\nonumber
|\mathbb E_{\hat{\mathbb Q}_2}[f_i(\mathbb S^{(1)})]-
\mathbb E_{\hat{\mathbb Q}_1}[f_i(\mathbb S^{(1)})]| &=
|\mathbb E_{\hat{\mathbb Q}_1}[f_i(\dot{\mathbb S}^{(1)})]-\mathbb E_{\hat{\mathbb Q}_1}[f_i(\mathbb S^{(1)})]|\\
\label{5.3}
&\leq L  C^{(2)}_J (\exp(C^{(1)}_J |\tilde\kappa_1-\tilde\kappa_2|)-1)\\
&\leq
C^{(3)}_J|\tilde\kappa_1-\tilde\kappa_2|, \ \ i\leq N-1,\nonumber
\end{align}
and for $i=N$
\begin{align}
\label{5.newww}
|\mathbb E_{\hat{\mathbb Q}_2}[f_N(\mathbb S^{(1)})]&=
\mathbb E_{\hat{\mathbb Q}_1}[f_N(\dot{\mathbb S}^{(1)})]|\\
& \leq\frac{\mathbb E_{\hat{\mathbb Q}_1}[f_N(\mathbb S^{(1)})](1+L(\exp(C^{(1)}_J |\tilde\kappa_1-\tilde\kappa_2|)-1))]}{1-L(\exp(C^{(1)}_J |\tilde\kappa_1-\tilde\kappa_2|)-1)}\nonumber
\\
& \qquad +\frac{L (\exp(C^{(1)}_J |\tilde\kappa_1-\tilde\kappa_2|)-1)
\mathbb E_{\hat{\mathbb Q}_1}[||\mathbb S^{(1)}||]}
{1-L(\exp(C^{(1)}_J |\tilde\kappa_1-\tilde\kappa_2|)-1)}\nonumber
\\
&\leq\mathbb E_{\hat{\mathbb Q}_1}[f_N(\mathbb S^{(1)})]+C^{(3)}_J|\tilde\kappa_1-\tilde\kappa_2|.\nonumber
\end{align}
Next, we modify the probability measure $\hat{\mathbb Q}_2$ so it will satisfy property (3) of
Definition \ref{dfn.1} for $H^{(2)}_1,...,H^{(2)}_N$.}
Clearly, the measure $\mathbb{Q}\otimes\mathbb Q$ is a probability
measure on $\hat\Omega$,
where the probability measure $\mathbb Q$ is given in Assumption \ref{asm.noarbitrage}.
For any $\lambda\in (0,1)$ consider the probability measure
$$\hat{\mathbb Q}_{\lambda}=\sqrt\lambda [\mathbb{Q}\otimes\mathbb Q]+ (1-\sqrt\lambda)\hat{\mathbb Q}_2.$$
Observe that
$$\mathbb E_{\mathbb{Q}\otimes\mathbb Q}[f_i({\mathbb S}^{(1)})]=\mathbb E_{\mathbb Q}[f_i(\mathbb S)]=D_i, \ \ i\leq N.$$
Set
$\Lambda=\sum_{k=1}^N |H^{(1)}_k-H^{(2)}_k|+|\tilde\kappa_1-\tilde\kappa_2|$.
From \reff{5.3}--\reff{5.newww} and the fact that $D_i< H^{(1)}_i$
 it follows that for $\Lambda$ sufficiently small
\begin{eqnarray*}
&|\mathbb E_{\hat{\mathbb Q}_{\Lambda}}[f_i(\mathbb S^{(1)})]\leq
\sqrt\Lambda D_i+(1-\sqrt\Lambda)(H^{(1)}_i+C^{(3)}_J|\tilde\kappa_1-\tilde\kappa_2|)\leq\\
&H^{(1)}_i-\sqrt\Lambda(H^{(1)}_i-D_i)+C^{(3)}_J\Lambda<H^{(1)}_i-\Lambda\leq H^{(2)}_i.
\end{eqnarray*}
This together with
\reff{5.1+}--\reff{5.1++} yields that $\hat{\mathbb Q}_{\Lambda}\in \mathcal{M}_{\tilde\kappa_2,H^{(2)}}$.
Finally, from \reff{5.1} and \reff{5.2} we obtain
\begin{align*}
\Gamma(H^{(1)},\tilde\kappa_1)-\Gamma(H^{(2)},\tilde\kappa_2)
& \leq\epsilon+\mathbb E_{\hat{\mathbb Q}_1}[G(\mathbb S^{(1)})]-
(1-\sqrt\Lambda)\mathbb E_{\hat{\mathbb Q}_2}[G(\mathbb S^{(1)})]
\\
&\leq\epsilon+C^{(3)}_J|\tilde\kappa_1-\tilde\kappa_2|+\sqrt{\Lambda}K.
\end{align*}
Since $\epsilon>0$ was arbitrary, this completes the proof.
\end{proof}

Now, we are ready to prove the lower bound of Theorem \ref{thm.main}.
\begin{lem}\label{lem6.2}
$$
V^{\mathbb P}_{\kappa}(G)\geq\sup_{\hat{\mathbb Q}\in \mathcal{M}_{\kappa,{\cL}}}\mathbb E_{\hat{\mathbb Q}}[G(\mathbb S^{(1)})].
$$
\end{lem}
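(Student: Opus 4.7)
The plan is to construct, for each $\hat{\mathbb Q}\in\mathcal M_{\kappa,\cL}$, a sequence of discretized measures $\tilde{\mathbb Q}_n\in\mathcal M^{\cT_n,\epsilon_n}_{\tilde\kappa_n,\cL}$ whose $G$-expectations approximate $E_{\hat{\mathbb Q}}[G(\mathbb S^{(1)})]$, then plug them into Lemma \ref{lem3.1} and let $\epsilon_n\downarrow 0$. Because the static-option constraints in Definition \ref{d.measures} are strictly tighter than $\cL$, I route the argument through the continuity of $\Gamma$ (Lemma \ref{lem5.1}) and the strict inequality of Assumption \ref{asm.noarbitrage}. Specifically, I choose $H^{(n)}\in\mathbf D$ with $H^{(n)}\uparrow\cL$ slowly enough that the cushion $\cL_i-H^{(n)}_i$ dominates all discretization errors $L\hat C(e^{4\epsilon_n}+\epsilon_n-1)$ (and the analogous error at $i=N$), and sequences $\tilde\kappa_n\uparrow\kappa$, $\epsilon_n\downarrow 0$ satisfying \reff{condition}. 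By Lemma \ref{lem5.1}, $\Gamma(H^{(n)},\tilde\kappa_n)\to\Gamma(\cL,\kappa)=\sup_{\hat{\mathbb Q}\in\mathcal M_{\kappa,\cL}}E_{\hat{\mathbb Q}}[G(\mathbb S^{(1)})]$.

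Pick $\hat{\mathbb Q}'_n\in\mathcal M_{\tilde\kappa_n,H^{(n)}}$ with $E_{\hat{\mathbb Q}'_n}[G(\mathbb S^{(1)})]\ge\Gamma(H^{(n)},\tilde\kappa_n)-1/n$. On $\hat\Omega$ define hitting times $\tau_0:=0$ and $\tau_k:=T\wedge\inf\{t>\tau_{k-1}:|\ln\mathbb S^{(1)}_t-\ln\mathbb S^{(1)}_{\tau_{k-1}}|=\epsilon_n\}$, take partitions $\cT_n$ of $[0,T]$ whose mesh tends to $0$, and set $\sigma_k:=\min\{t\in\cT_n\cup\{T\}:t\ge\tau_k\}$. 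Define the piecewise-constant \cad process $\tilde S_t:=\sum_k\mathbb S^{(1)}_{\tau_k}\chi_{[\sigma_k,\sigma_{k+1})}(t)$ and let $\tilde{\mathbb Q}_n$ be its $\hat{\mathbb Q}'_n$-law on $\mathbb D[0,T]$. Conditions (1)--(3) of Definition \ref{d.measures} are immediate from the construction. For (4) I take $\tilde M_t$ to be the \cad version of $E_{\hat{\mathbb Q}'_n}[\mathbb S^{(2)}_T\mid\tilde{\mathbb F}_t]$ and use the martingale property of $\mathbb S^{(2)}$ combined with the bound $(1-\tilde\kappa_n)\mathbb S^{(1)}\le\mathbb S^{(2)}\le(1+\tilde\kappa_n)\mathbb S^{(1)}$ and the slack \reff{condition} to sandwich $\tilde M$ between $(1\pm\tilde\kappa_n)\tilde S$. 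For (5) I apply Assumption \ref{asm2.1}, the estimates $|\sigma_k-\tau_k|\le\mathrm{mesh}(\cT_n)$ and $|\ln\tilde S_t-\ln\mathbb S^{(1)}_t|\le 2\epsilon_n$, and the moment bound on $\|\mathbb S^{(1)}\|$, exactly as in \reff{3.7}--\reff{3.neww}, to absorb the option error into the cushion $\cL_i-H^{(n)}_i$.

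Applying Lemma \ref{lem3.1} to $\tilde{\mathbb Q}_n$ gives $V^{\mathbb P}_{\kappa}(G)\ge E_{\tilde{\mathbb Q}_n}[G(\tilde S)]-L\hat C(e^{4\epsilon_n}+\epsilon_n-1)$, and the same Lipschitz estimate yields $E_{\tilde{\mathbb Q}_n}[G(\tilde S)]\ge E_{\hat{\mathbb Q}'_n}[G(\mathbb S^{(1)})]-o(1)\ge\Gamma(H^{(n)},\tilde\kappa_n)-1/n-o(1)$ as $n\to\infty$. Letting $n\to\infty$ and invoking Lemma \ref{lem5.1} then gives the desired inequality.

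The main obstacle I expect is verifying condition (4) of Definition \ref{d.measures}: the filtration $\tilde{\mathbb F}$ generated by the piecewise-constant $\tilde S$ is strictly coarser than the one generated by $\hat{\mathbb S}$, so the conditional expectation $E_{\hat{\mathbb Q}'_n}[\mathbb S^{(2)}_T\mid\tilde{\mathbb F}_t]$ no longer coincides with $\mathbb S^{(2)}_t$, and establishing the clean two-sided sandwich by $(1\pm\tilde\kappa_n)\tilde S_t$ rather than by $(1\pm\tilde\kappa_n)\mathbb S^{(1)}_t$ forces one to exploit both the $\epsilon_n$-closeness of $\tilde S$ and $\mathbb S^{(1)}$ at the hitting times and the multiplicative gap in \reff{condition}.
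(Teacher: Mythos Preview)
Your strategy coincides with the paper's: use Lemma \ref{lem5.1} to buy slack ($\tilde\kappa<\kappa$, $H<\cL$), discretize $\mathbb S^{(1)}$ along $\epsilon$-hitting times projected onto a grid, build the accompanying martingale by conditioning $\mathbb S^{(2)}_T$ on the coarse filtration, and feed the result into Lemma \ref{lem3.1}. Two technical steps are missing, however, and as written the construction does not land in $\mathcal M^{\cT,\epsilon}_{\tilde\kappa,\cL}$.

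First, you never truncate the number of hitting times. Definition \ref{d.measures} fixes in advance the number of jump times (tied to the partition size), and condition (2) forces every jump of $\tilde{\mathbb S}$ to have log-size exactly $\epsilon$. With your choice $\sigma_k:=\min\{t\in\cT_n:t\ge\tau_k\}$, several consecutive $\tau_k$ can fall in one grid cell; their $\sigma_k$ coincide, the intermediate intervals $[\sigma_k,\sigma_{k+1})$ collapse, and the surviving jump of $\tilde S$ has log-size a multiple of $\epsilon$, so (2) fails. The paper first caps the process at $n$ jumps, controlling the tail error $E_{\hat{\mathbb Q}}[\|\mathbb S^{(1)}\|\chi_{\{\mathbb K\ge n\}}]$ and, for $f_N$, invoking Jensen for the convex $q$ along the $\mathbb S^{(2)}$-martingale to bound $E_{\hat{\mathbb Q}}[q(\mathbb S^{(1)}_{\tau_n})\chi_{\{\mathbb K\ge n\}}]$. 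Only then does it project to a grid, and it does not use your nearest-point rule; it rounds each \emph{increment} $\Delta\tau_k$ up to a multiple of $T/m$ and caps at $T(1-2^{-k}/m)$, which guarantees $\sigma_{k+1}>\sigma_k$ whenever $\sigma_k<T$.

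Second, the sandwich you assert for $\tilde M$ is too tight. Conditioning $\mathbb S^{(2)}_T$ on the $\dot S$-filtration and comparing with $\dot S$ rather than $\mathbb S^{(1)}$ costs an extra factor $e^{O(\epsilon)}$, so one only obtains $|\tilde M_t-\dot S_t|\le(\tilde\kappa+c\epsilon)\dot S_t$; the paper absorbs this by introducing an intermediate $\hat\kappa:=\tilde\kappa+6\epsilon$ and checking \reff{condition} for $\hat\kappa$ rather than $\tilde\kappa$. This is easy to repair, but the extra cushion must be budgeted from the outset when you pick the sequences $\tilde\kappa_n,\epsilon_n$.
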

\begin{proof}
In view of Lemma \ref{lem5.1},
it is sufficient to prove that
\begin{equation}\label{5.7}
V^{\mathbb P}_{\kappa}(G)\geq \mathbb E_{\hat{\mathbb Q}}[G(\mathbb S^{(1)})],
\end{equation}
for every $\hat{\mathbb Q}\in \mathcal{M}_{\tilde\kappa,\tilde \cL}$
with  $\tilde\kappa<\kappa$ and
$\tilde{\mathcal L}_i<\mathcal{L}_i$, $i\leq N$.

We proceed in two steps. In the first step, we modify the process
$\mathbb S^{(1)}$. In the second step, we apply Lemma \ref{lem3.1}
to the modified process.

\textbf{First step:}
Let $\epsilon>0$.
Define the stopping times,
$\tau^{(\epsilon)}_0:=\tau^{(\epsilon)}_0({\mathbb S^{(1)}})=0$
and for  $k>0$,
$$
\tau^{(\epsilon)}_k:=\tau^{(\epsilon)}_k({\mathbb S^{(1)}})=T
\wedge\inf\left\{t>\tau^{(\epsilon)}_{k-1}:
\mathbb S^{(1)}_t=exp(\pm \epsilon) \mathbb S^{(1)}_{\tau^{(\epsilon)}_{k-1}}
\right\},
$$
and the random variable
$\mathbb K:=\min\{k:\tau^{(\epsilon)}_k=T\}-1<\infty$.
Let $n\in\mathbb N$.
Introduce the stochastic process
$$
\tilde{S}^{(n)}_t=\sum_{i=0}^{n-1}
\mathbb S^{(1)}_{\tau^{(\epsilon)}_i}
{\chi}_{[\tau^{(\epsilon)}_i,\tau^{(\epsilon)}_{i+1})}(t)
+\mathbb S^{(1)}_{\tau^{(\epsilon)}_{\mathbb K\wedge n}}
{\chi}_{[\tau^{(\epsilon)}_n,T]}(t),
\ \ t\in [0,T].
$$
{The stochastic process $\tilde S^{(n)}$ is a pure jump process which agrees with $\mathbb S^{(1)}$
at the jump times $\tau^{(\epsilon)}_1,...,\tau^{(\epsilon)}_{n\wedge\mathbb K}$ and
remains constant afterwards.}

{We argue that for sufficiently large $n$ the terms
$\mathbb E_{\hat{\mathbb Q}}|f_i(\tilde S^{(n)})-f_i(\mathbb S^{(1)})|$, $i=1,...,N$
and $\mathbb E_{\hat{\mathbb Q}}|G(\tilde S^{(n)})-G(\mathbb S^{(1)})|$
are small.
Indeed, as before the fact $\hat{\mathbb Q}\in \mathcal{M}_{\tilde\kappa,\tilde \cL}$
implies that $\mathbb E_{\hat{\mathbb Q}}[\|\mathbb S^{(1)})\|]\leq \hat C$ (where, recall the constant $\hat C$
from Definition \ref{d.measures})
 and so
$\lim_{n\rightarrow\infty} \mathbb E_{\hat{\mathbb Q}}[\|\mathbb S^{(1)}\|{\chi}_{\{\mathbb K\geq n\}}]=0.$
From Assumption \ref{asm2.1} we get
\begin{align*}
\lim\sup_{n\rightarrow\infty}
\left|\mathbb E_{\hat{\mathbb Q}}[f_i(\mathbb S^{(1)})]
-\mathbb E_{\hat{\mathbb Q}}[f_i(\tilde {S}^{(n)})]
\right| & \leq
 \lim\sup_{n\rightarrow\infty}\mathbb E_{\hat{\mathbb Q}}[|f_i(\mathbb S^{(1)})-f_i(\tilde {S}^{(n)})|{\chi}_{\{\mathbb K<n\}}]\\
 & \ \ +
2L \lim_{n\rightarrow\infty}\mathbb E_{\hat{\mathbb Q}}[\|\mathbb S^{(1)}\|{\chi}_{\{\mathbb K\geq n\}}]\\
&\leq L(e^{\epsilon}-1)\mathbb E_{\hat{\mathbb Q}}[\|\mathbb S^{(1)})\|]
\\ & \leq L(e^{\epsilon}-1)\hat C.
\end{align*}
Similarly,
\begin{equation}
\lim\sup_{n\rightarrow\infty}
\left|\mathbb E_{\hat{\mathbb Q}}[G(\mathbb S^{(1)})]
-\mathbb E_{\hat{\mathbb Q}}[G(\tilde {S}^{(n)})]
\right|\leq L(e^{\epsilon}-1)\hat C.
\end{equation}}
{It remains to treat the case $i=N$. From
Assumption \ref{asm.regularity} it follows that there exists $\delta>0$ such
that
$$|\ln x-\ln y|<\delta\Rightarrow q(y)<2(q(x)+x).$$ We conclude
that there exists a constant $C_4$ such that
or any $x,y>0$ we have
$$(1-\tilde\kappa) x \leq y\leq \frac{1}{1-\tilde\kappa} x\Rightarrow
q(y)\leq C_4 (q(x)+x).
$$
This together with
property (2) of Definition \ref{dfn.1}
yields
$$
\mathbb E_{\hat{\mathbb Q}} [q(\mathbb S^{(1)}_{\tau^{(\epsilon)}_n}){\chi}_{\{\mathbb K\geq n\}}]
\leq C_4\mathbb E_{\hat{\mathbb Q}}
\left[ \left(q(\mathbb S^{(2)}_{\tau^{(\epsilon)}_n})+\mathbb S^{(2)}_{\tau^{(\epsilon)}_n}\right){\chi}_{\{\mathbb K\geq n\}}\right].
$$
Since $\mathbb S^{(2)}$ is a martingale and
$\{\mathbb K\geq n\}=\{\tau^{(\epsilon)}_{n}<T\}
\in\hat{\mathbb{F}}_{\tau^{(\epsilon)}_n}$,
then from the Jensen inequality (for the convex function $q(x)+x$) we obtain,}
\begin{align*}
\mathbb E_{\hat{\mathbb Q}} [q(\mathbb S^{(1)}_{\tau^{(\epsilon)}_n}){\chi}_{\{\mathbb K\geq n\}}]
& \leq C_4\mathbb E_{\hat{\mathbb Q}}
\left[ \left(q(\mathbb S^{(2)}_{T})+\mathbb S^{(2)}_{T}\right){\chi}_{\{\mathbb K\geq n\}}\right]\\
& \leq
C_4
\mathbb E_{\hat{\mathbb Q}}
\left[ \left(C_4 q(\mathbb S^{(1)}_{T})+(1+\tilde\kappa)\mathbb S^{(1)}_{T}\right){\chi}_{\{\mathbb K\geq n\}}\right].
\end{align*}
Thus the inequality $\mathbb E_{\hat{\mathbb Q}}[(q(\mathbb S^{(2)}_{T})]<\infty$ implies
\begin{align*}
 \lim\sup_{n\rightarrow\infty}
&\left|\mathbb E_{\hat{\mathbb Q}}[f_N(\mathbb S^{(1)})]
-\mathbb E_{\hat{\mathbb Q}}[f_N(\tilde {S}^{(n)})]
\right|  \\ &\qquad \leq
\lim\sup_{n\rightarrow\infty}\mathbb E_{\hat{\mathbb Q}}\left[\left(f_N(\mathbb S^{(1)})+
f_N(\tilde {S}^{(n)})\right){\chi}_{\{\mathbb K\geq n\}}\right]
\\ &\qquad \leq
\lim\sup_{n\rightarrow\infty}C_4
\mathbb E_{\hat{\mathbb Q}}
\left[(C_4+1/C_4) \left(q(\mathbb S^{(1)}_{T})
+(1+\tilde\kappa)\mathbb S^{(1)}_{T}\right){\chi}_{\{\mathbb K\geq n\}}\right]\\
&\qquad  =0.
\end{align*}
We conclude that for sufficiently large $n$
\begin{eqnarray}\label{5.12}
&\left|\mathbb E_{\hat{\mathbb Q}}[G(\mathbb S^{(1)})]
-\mathbb E_{\hat{\mathbb Q}}[G(\tilde{ S}^{(n)})]
\right|\leq 2L(e^{\epsilon}-1)\hat C \ \ \mbox{and}\\
&\left|\mathbb E_{\hat{\mathbb Q}}[f_i(\mathbb S^{(1)})]
-\mathbb E_{\hat{\mathbb Q}}[f_i(\tilde {S}^{(n)})]
\right|\leq 2L(e^{\epsilon}-1)\hat C \ \ i\leq N.
\nonumber
\end{eqnarray}
We fix $n$ sufficiently large that
the above inequalities hold and set $\tilde{S}:=\tilde{S}^{(n)}$.

{Next, we modify the jump times so they will lie on a grid.}
Let $m\in\mathbb N$.
Define by recursion the following sequence of random variables,
\begin{eqnarray*}
\hat\tau^{(\epsilon)}_k&:=&\sum_{i=1}^k\ \Delta \hat \tau^{(\epsilon)}_i, \ \ \mbox{where}\\
\Delta  \hat\tau^{(\epsilon)}_i&=& \min\{\Delta t \in
\{T/m,2 T/m,...,T\}: \Delta t\geq \Delta \tau^{(\epsilon)}_i:= \tau^{(\epsilon)}_i-\tau^{(\epsilon)}_{i-1}\},
\end{eqnarray*}
and
$$\sigma_k=T{\chi}_{\{\tau^{(\epsilon)}_k=T\}}+\hat \tau^{(\epsilon)}_k\wedge(T(1-2^{-k}/m))
{\chi}_{\{\tau^{(\epsilon)}_k<T\}}, \ \ k=0,1,...,n.
$$
Observe that for any $i$, $\sigma_{i+1}\geq \sigma_i$
and $\sigma_{i+1}=\sigma_i$ if and only if $\sigma_i=T$.
Notice that $\sigma_1,...,\sigma_{n}$ are not (in general) stopping times
with respect to the filtration $\hat{\mathbb F}$. Define the stochastic
process
$$
\dot{S}_t:=\dot{S}^{(m)}_t
=\sum_{i=0}^{n-1}\mathbb S^{(1)}_{\tau^{(\epsilon)}_i}
{\chi}_{[\sigma_i,\sigma_{i+1})}(t)
+\mathbb S^{(1)}_{\tau^{(\epsilon)}_{\mathbb K\wedge n}}
{\chi}_{[\sigma_n,T]}(t), \ \ t\in [0,T].
$$

\textbf{Second step:}
The process $\dot{S}_t$ is a piecewise constant process, and the jump times
are lying on a finite grid. Thus the natural filtration
which is generated by $\dot S$ is right continuous, and so
the martingale
$$\hat M_t:=\mathbb E_{\hat{\mathbb Q}}(\mathbb S^{(2)}_T|\dot S_u, u\leq t)$$
is a $c\grave{a}dl\grave{a}g$ martingale.
Let $k\leq n$. Clearly, $\sigma_k$ is a stopping time with respect to the natural filtration
generated by $\dot S$. Furthermore $\dot S_{[0,\sigma_k]}$ is measurable with respect to
$\hat{\mathbb F}_{\tau^{(\epsilon)}_k}$.
This together with the fact that
$$
e^{-\epsilon}\leq\frac{\dot{S}_{\sigma_k}}{\mathbb S^{(1)}_{\tau^{(\epsilon)}_k}}
\leq e^{\epsilon}
$$
and properties (1)--(2) in Definition \ref{dfn.1}, imply that
\begin{eqnarray*}
|\hat M_{\sigma_k}-\dot S_{\sigma_k}|&=&
\left|\mathbb E_{\hat{\mathbb Q}}\left(
\mathbb E_{\hat{\mathbb Q}}[\mathbb S^{(2)}_T\ |\
\hat{\mathbb F}_{\tau^{(\epsilon)}_k}]\ \left|\right.
\dot S_u, u\leq \sigma_k\right)-\dot S_{\sigma_k}\right|\\
&\leq&
\dot S_{\sigma_k}((1+\tilde\kappa)e^{\epsilon}-1)
\leq\dot S_{\sigma_k}(\tilde\kappa+2\epsilon),
\end{eqnarray*}
where in the last equality we assume that $\epsilon$ is sufficiently small.
Let $\sigma_{n+1}=T$.
Then, for any $k\leq n$ and $t\in [\sigma_k,\sigma_{k+1}]$,
we conclude that
$$
e^{-2\epsilon}(1-\tilde\kappa-2\epsilon)\dot S_t
\leq\hat M_{\sigma_{k+1}}\leq e^{2\epsilon}(1+\tilde\kappa+2\epsilon)\dot S_t.
$$
Since $\hat M$ is a martingale with respect to the natural filtration
of $\dot S$, we conclude that for sufficiently small $\epsilon$,
\begin{equation}\label{5.20}
|\hat M_t-\dot S_t|\leq (1+\tilde\kappa+5\epsilon)\dot S_t.
\end{equation}
Clearly,
$$
\lim_{m\rightarrow\infty}
\|\tilde S-\dot S^{(m)}\|=0, \ \ \hat{\mathbb Q} \ \ \mbox{a.s.}
$$
Observe that the above processes are uniformly bounded.
Hence, by
Assumptions \ref{asm2.1}--\ref{asm.regularity},
\begin{eqnarray}\label{5.10}
\mathbb E_{\hat{\mathbb Q}}[G(\tilde S)]&=&\lim_{m\rightarrow\infty}
\mathbb E_{\hat{\mathbb Q}}[G(\dot S^{(m)})] \ \ \mbox{and} \\
\mathbb E_{\hat{\mathbb Q}}[f_i(\tilde S)]&=&\lim_{m\rightarrow\infty}
\mathbb E_{\hat{\mathbb Q}}[f_i(\dot S^{(m)})], \ \ i\leq N.\nonumber
\end{eqnarray}
Denote by $\dot{\mathbb Q}_m$ the distribution
of $\dot S^{(m)}$ on the space $\mathbb D[0,T]$.
Let us choose $\epsilon$ such that
$\hat\kappa:=\tilde\kappa+6\epsilon$ is satisfies
$$\min\left(\frac{1+\kappa}{1+\hat\kappa},\frac{1-\hat\kappa}{1-\kappa}\right)\geq e^{2\epsilon},$$
and
\begin{align*}
\mathcal L_i-L (\hat C+\mathcal L_N)(e^{4\epsilon}+\epsilon-1)&>3 L(e^{\epsilon}-1)\hat C+\tilde{\mathcal L_i}, \ \ i<N,\\
\frac{\cL_N(1-L(e^{\epsilon}-1))-L\hat C(e^{\epsilon}-1)}{1+L(e^\epsilon-1)}&
>3 L(e^{\epsilon}-1)\hat C+\tilde{\mathcal L}_N.
\end{align*}
From \reff{5.12}--\reff{5.10}, it follows that for sufficiently large $m$
the measure
$\dot{\mathbb Q}_m\in \mathcal{M}^{\cT,\epsilon}_{\hat\kappa,{\cL}}$
with the choice $\cT:=\{k T 2^{-n}/m\}_{k=0}^{2^n m}$. Thus, in
view of Lemma \ref{lem3.1}, we have
$$
V^{\mathbb P}_{\kappa}(G)
\geq \mathbb E_{\hat{\mathbb Q}}[G(\dot S^{(m)})]-L\hat C (e^{4\epsilon}+\epsilon-1).
$$
We now apply \reff{5.12}, \reff{5.10} and take
the limit as $m$ tends to infinity.
The result is
$$
V^{\mathbb P}_{\kappa}(G)
\geq \mathbb E_{\hat{\mathbb Q}}[G(\mathbb {S}^{(1)})]-2 L(e^{\epsilon}-1)\hat C
-L\hat C (e^{4\epsilon}+\epsilon-1).
$$
Now, \reff{5.7} follows after taking the limit as $\epsilon$ tends to zero.
\end{proof}

Next, we establish the upper bound (\ref{e.upper}).
\begin{lem}\label{lem6.3}
$$
V_{\kappa}(G)
\leq\sup_{\hat{\mathbb Q}\in \mathcal{M}_{\kappa,{\cL}}}
\mathbb E_{\hat{\mathbb Q}}[G(\mathbb S^{(1)})].
$$
\end{lem}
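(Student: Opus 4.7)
The plan is to combine Lemma~\ref{lem4.1} with a continuous-time lifting construction, and then pass to the limit using the continuity established in Lemma~\ref{lem5.1}. Recall that by the reduction in Section~\ref{sec2+} we may assume $G$ is bounded, which is what allows Lemma~\ref{lem4.1} to apply. For every $\epsilon>0$, $\Lambda>0$, and every $\tilde\kappa$ satisfying \reff{condition1}, Lemma~\ref{lem4.1} gives
$$V_\kappa(G) \;\le\; \Big(\sup_{\tilde{\mathbb Q} \in \mathcal{M}^{\epsilon,\Lambda}_{\tilde\kappa,\cL}} \mathbb{E}_{\tilde{\mathbb Q}}[G(\tilde{\mathbb S})]\Big)^+ + L(e^{2\epsilon}+\epsilon-1)\frac{\hat C^2}{2(1-8\kappa)}.$$
The rightmost term vanishes as $\epsilon\to 0$, while \reff{condition1} forces $\tilde\kappa\to \kappa$ in this limit. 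It therefore suffices, for an arbitrarily chosen near-maximizer $\tilde{\mathbb Q}\in\mathcal{M}^{\epsilon,\Lambda}_{\tilde\kappa,\cL}$ with associated c\`adl\`ag martingale $\tilde M$, to produce a consistent price system $\hat{\mathbb Q}$ lying in a slightly enlarged class whose $G$-expectation is close to $\mathbb{E}_{\tilde{\mathbb Q}}[G(\tilde{\mathbb S})]$.

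Next, I would build $\hat{\mathbb Q}$ by lifting the piecewise-constant pair $(\tilde{\mathbb S},\tilde M)$ to a pair of \emph{continuous} processes $(\bar S, \bar M)$ on an enlarged filtered probability space carrying an independent Brownian motion. Concretely, $\bar S$ would be obtained by geometric interpolation of $\tilde{\mathbb S}$ on a fine subgrid of the jump times, keeping $\bar S\in\mathcal{C}^{++}_1[0,T]$, while the jumps of $\tilde M$ would be smoothed by short Brownian-bridge segments of length $\delta$ in the enlarged filtration, yielding a continuous martingale $\bar M$; this mirrors the lifting procedure of the authors' earlier work \cite{DS}. By choosing the bridge length $\delta$ small enough, the pathwise bounds $(1-\tilde\kappa)\bar S_t\le\bar M_t\le(1+\tilde\kappa)\bar S_t$ are preserved on the whole interval, and the sup-norm differences $\|\bar S-\tilde{\mathbb S}\|$ and $\|\bar M-\tilde M\|$ are of order $\epsilon$. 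Letting $\hat{\mathbb Q}$ denote the joint law of $(\bar S,\bar M)$ on $\hat\Omega$, Assumptions~\ref{asm2.1}--\ref{asm.regularity} yield
$$\big|\mathbb E_{\hat{\mathbb Q}}[G(\mathbb S^{(1)})] - \mathbb E_{\tilde{\mathbb Q}}[G(\tilde{\mathbb S})]\big| = O(\epsilon), \qquad \mathbb E_{\hat{\mathbb Q}}[f_i(\mathbb S^{(1)})] \le \cL_i + \eta_\epsilon, \;\; i=1,\ldots,N,$$
for some $\eta_\epsilon\to 0$; equivalently, $\hat{\mathbb Q}\in\mathcal{M}_{\tilde\kappa,\cL+\eta_\epsilon\mathbf 1}$ in the notation of Lemma~\ref{lem5.1}.

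Combining the previous two steps gives, in the notation of Lemma~\ref{lem5.1},
$$V_\kappa(G)\;\le\;\Gamma\bigl(\cL + \eta_\epsilon\mathbf{1},\,\tilde\kappa\bigr) + o(1).$$
Assumption~\ref{asm.noarbitrage} guarantees $\cL\in\mathbf D$, and as $\epsilon\to 0$ we have $(\cL+\eta_\epsilon\mathbf 1,\tilde\kappa)\to(\cL,\kappa)$ inside $\mathbf D\times(0,1)$. The continuity of $\Gamma$ supplied by Lemma~\ref{lem5.1} therefore delivers
$$V_\kappa(G)\;\le\;\Gamma(\cL,\kappa)\;=\;\sup_{\hat{\mathbb Q}\in\mathcal{M}_{\kappa,\cL}}\mathbb E_{\hat{\mathbb Q}}[G(\mathbb S^{(1)})],$$
which is exactly the claim of the lemma.

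The main technical obstacle is the middle step: constructing a continuous martingale $\bar M$ on an enlarged filtration that is simultaneously uniformly close to the given c\`adl\`ag $\tilde M$ and pathwise sandwiched between $(1\pm\tilde\kappa)\bar S$ for a continuous $\bar S$ close to $\tilde{\mathbb S}$. The strict inequality $\tilde\kappa>\kappa$ built into \reff{condition1} and the freedom in choosing the Brownian-bridge length $\delta$ provide the margin needed to absorb the interpolation errors in both the stock and the shadow price, while still leaving the output measure inside a class whose parameters converge to $(\kappa,\cL)$; the rest reduces to routine Lipschitz and martingale estimates.
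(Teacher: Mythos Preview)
Your overall architecture---Lemma~\ref{lem4.1}, then a continuous lifting, then Lemma~\ref{lem5.1}---matches the paper's, but the middle step has two genuine gaps.

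\textbf{The Brownian-bridge construction does not give a martingale.} A Brownian bridge from $a$ to $b$ over $[\tau,\tau+\delta]$ has drift toward the (already known) endpoint $b$; its conditional expectation at time $\tau$ of the value at $\tau+\delta$ is $b$, not $a$. Hence replacing each jump of $\tilde M$ by a bridge segment destroys the martingale property rather than preserving it. There is no elementary ``smoothing'' of a jump martingale into a continuous martingale that stays uniformly close in sup-norm: the jump $\Delta\tilde M_\tau$ is $\tilde{\mathbb F}_\tau$-measurable, so any continuous process equal to $\tilde M_{\tau-}$ at $\tau$ and to $\tilde M_{\tau-}+\Delta\tilde M_\tau$ at $\tau+\delta$ cannot be a martingale in any filtration containing $\tilde{\mathbb F}_\tau$. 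The paper circumvents this by a completely different device: it first forces the jump times onto a fixed finite grid, then uses Skorokhod's representation theorem to realize the discrete-time pair $(\dot S_{t_i},\hat M_{t_i})_{i\le r}$ as functionals of Brownian increments, and finally defines the continuous martingale as $\hat M^W_t:=\mathbb E_{\mathbb P^W}[\hat M^W_{t_r}\mid\mathcal F^W_t]$, which is continuous because every Brownian martingale is. The price is that the sandwich constant widens from $\tilde\kappa$ to $\tilde\kappa+10\epsilon$ (cf.~\reff{5.18}); your claim that the bound is preserved with the \emph{same} $\tilde\kappa$ would fail even if the bridge idea worked.

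\textbf{The $\Lambda$-truncation on $f_N$ is ignored.} A measure $\tilde{\mathbb Q}\in\mathcal M^{\epsilon,\Lambda}_{\tilde\kappa,\cL}$ controls only $\mathbb E_{\tilde{\mathbb Q}}[f_N(\tilde{\mathbb S})\wedge\Lambda(\tilde{\mathbb S}_T+1)]$, not $\mathbb E_{\tilde{\mathbb Q}}[f_N(\tilde{\mathbb S})]$. Your lifting would therefore only yield $\mathbb E_{\hat{\mathbb Q}}[f_N(\mathbb S^{(1)})\wedge\Lambda(\mathbb S^{(1)}_T+1)]\le\cL_N+\eta_\epsilon$, which does \emph{not} place $\hat{\mathbb Q}$ in $\mathcal M_{\tilde\kappa,\cL+\eta_\epsilon\mathbf 1}$, so Lemma~\ref{lem5.1} is not applicable. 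The paper handles this in a separate third step: with $x_\Lambda$ the unique solution of $q(x)=\Lambda(x+1)$, it truncates the martingale to $M_t:=\mathbb E_{\mathbb P^W}[\hat M^W_T\wedge\rho_T x_\Lambda\mid\mathcal F^W_t]$ and rescales the stock accordingly, so that $S_T\le x_\Lambda$ deterministically and hence $q(S_T)=q(S_T)\wedge\Lambda(S_T+1)$. The resulting perturbation of both $G$ and $f_i$ is shown to be $O(1/\sqrt\Lambda)$, and only after letting $\Lambda\to\infty$ (in addition to $\epsilon\downarrow 0$) does one land in $\mathcal M_{\hat\kappa,\tilde\cL}$ and invoke Lemma~\ref{lem5.1}. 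Your proposal has no mechanism for removing the truncation, and the limiting argument as stated is incomplete.
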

\begin{proof}
Let  $\mathbb Q$ be the probability measure from Assumption
\ref{asm.noarbitrage}.  Then,
$\mathbb Q\otimes\mathbb Q
\in \mathcal{M}_{\kappa,({\mathcal L}_1,...,{\mathcal L}_N)}$.
Therefore, if $V_{\kappa}(G)\leq 0$,
then \reff{e.upper} is trivial.  So we may assume without loss of generality
that $V_{\kappa}(G)>0$.
Choose  $\epsilon>0$, $\Lambda>1$, $\hat\kappa>\tilde\kappa>\kappa$ and
$\tilde{\mathcal L}_i>\mathcal{L}_i$, $i\leq N$.
Assume that $\epsilon$ is sufficiently small so
$L(e^{2\epsilon}+\epsilon-1)
\frac{\hat C^2}{2(1-8\kappa)}<V_{\kappa}(G)$
and
$\tilde\kappa$  satisfies \reff{condition1}.
This together with Lemma \ref{lem4.1}
yields that
there exists a probability measure
$\tilde{\mathbb Q}\in \mathcal{M}^{\epsilon,\Lambda}_{\tilde\kappa,{\cL}}$
such that
\begin{equation}
\label{5.15}
V_{\kappa}(G)<\mathbb E_{\tilde{\mathbb Q}}[G(\tilde{\mathbb S})]+L(e^{2\epsilon}+\epsilon-1)
\frac{\hat C^2}{(1-8\kappa)}.
\end{equation}
Next, we proceed in three steps.
In the first step (similarly to Lemma \ref{lem6.2}), we  modify the stochastic
process $\tilde{\mathbb S}$. In the second step,
 we use the Wiener  space in order to construct a continuous
consistent price system with (almost) the required properties.
In the last step, we modify again the constructed
continuous consistent price system in order to get rid of the truncation in the term
$f_N(\mathbb S^{(1)})\wedge\Lambda\mathbb{S}^{(1)}_T$.
Finally, we Apply Lemma \ref{lem5.1}.

\textbf{First step:}
Let
$$(1-\tilde\kappa)\tilde{\mathbb S}_t\leq\tilde {M}_t\leq (1+\tilde\kappa)\tilde{\mathbb S}_t, \ \ t\in [0,T],$$
be the associated martingale {corresponding to the probability measure
$\tilde{\mathbb Q}\in \mathcal{M}^{\epsilon,\Lambda}_{\tilde\kappa,{\cL}}$}.
Let
$\tilde\tau^{(\epsilon)}_0:=\tilde\tau^{(\epsilon)}_0(\tilde{\mathbb S})=0$,
and for $k>0$ set,
$$
\tilde\tau^{(\epsilon)}_k:=\tilde\tau^{(\epsilon)}_k(\tilde{\mathbb S})=T\wedge
\inf\left\{t>\tilde\tau^{(\epsilon)}_{k-1}:
|\ln {\tilde{\mathbb S}}_{\tilde\tau^{(\epsilon)}_{k+1}}-\ln {\tilde{\mathbb S}}_{\tilde\tau^{(\epsilon)}_{k}}|=\epsilon\right\}
$$
and
$\tilde {\mathbb K}=\min\{k:\tilde\tau^{(\epsilon)}_k=T\}-1<\infty$.
{Observe that the probability measure $\tilde{\mathbb Q}$ supported on
$\mathbb D^{(\epsilon)}$ and so $\tilde\tau_k$, $k\geq 0$ are indeed stopping times.}

Let $n\in\mathbb N$.
Set,
$$
\tilde{S}^{(n)}_t:=\sum_{i=0}^{n-1}
\tilde{\mathbb S}_{\tilde\tau^{(\epsilon)}_i}
{\chi}_{[\tilde\tau^{(\epsilon)}_i,\tilde\tau^{(\epsilon)}_{i+1})}(t)
+\tilde{\mathbb S}_{\tilde\tau^{(\epsilon)}_{\tilde{\mathbb K}\wedge n}}
{\chi}_{[\tilde\tau^{(\epsilon)}_n,T]}(t), \ \ t\in [0,T].
$$
{From the definition
of the set $\mathcal{M}^{\epsilon,\Lambda}_{\tilde\kappa,{\cL}}$
it follows that $\mathbb{E}_{\tilde{\mathbb Q}}[q({\tilde{\mathbb S}}_T)
\wedge\Lambda(\tilde{\mathbb S}_T+1)]<\infty$, and so
and $\mathbb{E}_{\tilde{\mathbb Q}}[\tilde{\mathbb S}_T]<\infty$, as well.}
{Moreover,}
\begin{eqnarray*}
\mathbb E_{\tilde{\mathbb Q}} [\tilde{\mathbb S}_{\tilde\tau^{(\epsilon)}_n}{\chi}_{\{\tilde {\mathbb K}\geq n\}}]
&\leq&
(1+\tilde\kappa)\mathbb E_{\tilde{\mathbb Q}} [\tilde M_{\tilde\tau^{(\epsilon)}_n}
{\chi}_{\{\tilde {\mathbb K}\geq n\}}]=
(1+\tilde\kappa)\mathbb E_{\tilde{\mathbb Q}} [\tilde M_T{\chi}_{\{\tilde {\mathbb K}\geq n\}}]\\
&\leq&
(1+\tilde\kappa)^2\mathbb E_{\tilde{\mathbb Q}} [\tilde{\mathbb S}_T{\chi}_{\{\tilde {\mathbb K}\geq n\}}].
\end{eqnarray*}
{We conclude that}
\begin{equation}\label{5.nnn}
\lim_{n\rightarrow\infty} \mathbb E_{\tilde{\mathbb Q}} [(\tilde{\mathbb S}_{\tilde\tau^{(\epsilon)}_n}+\tilde{\mathbb S}_T){\chi}_{\{\tilde {\mathbb K}\geq n\}}]=0.
\end{equation}
{As in the proof of Lemma \ref{lem4.1}, we will use the fact that
$f_i(S)$, $i<N$ are bounded (from both sides) by a multiply of
$1+S_T$. This together with (\ref{5.nnn}) and the fact that $\tilde S^{(n)}=\tilde {\mathbb S}$ on the event
$\{n>\tilde{\mathbb K}\}$
 yields that for sufficiently large $n$,}
\begin{eqnarray}
\label{5.12+}
&&
\left|\mathbb E_{\tilde{\mathbb Q}}[G(\tilde{\mathbb S})]
-\mathbb E_{\tilde{\mathbb Q}}[G(\tilde{S}^{(n)})]\right|\leq \epsilon,\\
&&
\left|\mathbb E_{\tilde{\mathbb Q}}[f_i(\tilde{\mathbb S})]
-\mathbb E_{\tilde{\mathbb Q}}[f_i(\tilde {S}^{(n)})]\right|\leq \epsilon, \ \ i\leq N-1,
\nonumber\\
&&
\left|\mathbb{E}_{\tilde{\mathbb Q}}[q({\tilde{\mathbb S}}_T)
\wedge\Lambda(\tilde{\mathbb S}_T+1)]-
\mathbb{E}_{\tilde{\mathbb Q}}[q(\tilde S^{(n)}_T)\wedge\Lambda (\tilde{S}^{(n)}_T+1)]
\right|
\leq\epsilon.\nonumber
\end{eqnarray}
We choose $n$ sufficiently
large and set $\tilde{S}:=\tilde{S}^{(n)}$.

Next, let $m\in\mathbb N$.
Define by recursion the following sequence of random variables,
\begin{eqnarray*}
\hat\tau^{(\epsilon)}_k&:=&\sum_{i=1}^k\ \Delta \hat \tau^{(\epsilon)}_i, \ \ \mbox{where}\\
\Delta  \hat\tau^{(\epsilon)}_i&=& \min\{\Delta t \in
\{T/m,2 T/m,...,T\}: \Delta t\geq \Delta\tilde \tau^{(\epsilon)}_i:= \tilde\tau^{(\epsilon)}_i-\tilde\tau^{(\epsilon)}_{i-1}\},
\end{eqnarray*}
and
$$
\sigma_k=T{\chi}_{\{\tilde\tau^{(\epsilon)}_k=T\}}+\hat \tau^{(\epsilon)}_k\wedge(T(1-2^{-k}/m))
{\chi}_{\{\tilde\tau^{(\epsilon)}_k<T\}}, \ \ k=0,1,...,n.
$$
Similarly, to Lemma \ref{lem6.2} we have that for any $i$, $\sigma_{i+1}\geq \sigma_i$
and $\sigma_{i+1}=\sigma_i$ if and only if $\sigma_i=T$.
Define the stochastic
process
$$
\dot{S}_t:=\dot{S}^{(m)}_t=\sum_{i=0}^{n-1}\tilde{\mathbb S}_{\tilde\tau^{(\epsilon)}_i}
{\chi}_{[\sigma_i,\sigma_{i+1})}(t)
+\tilde{\mathbb  S}_{\tilde\tau^{(\epsilon)}_{\tilde{\mathbb K}\wedge n}}{\chi}_{[\sigma_n,T]}(t),
\ \ t\in [0,T].
$$
Again, as in Lemma \ref{lem6.2} the process $\dot{S}_t$ is a piecewise constant process, and the jump times
are lying on a finite grid. Introduce the ($c\grave{a}dl\grave{a}g$) martingale
$$
\hat M_t:=\mathbb E_{\hat{\mathbb Q}}(\tilde M_T|\dot S_u, u\leq t).
$$

By using the same arguments as in (\ref{5.20})--(\ref{5.10}) we get
\begin{equation}
\label{5.200}
|\hat M_t-\dot S_t|\leq (1+\tilde\kappa+5\epsilon)\dot S_t,
\end{equation}
and
\begin{eqnarray}\label{5.100}
\mathbb E_{\tilde{\mathbb Q}}[G(\tilde S)]&=&\lim_{m\rightarrow\infty}
\mathbb E_{\tilde{\mathbb Q}}[G(\dot S^{(m)})], \\
\mathbb E_{\tilde{\mathbb Q}}[f_i(\tilde S)]&=&\lim_{m\rightarrow\infty}
\mathbb E_{\tilde{\mathbb Q}}[f_i(\dot S^{(m)})], \ \ i\leq N-1 \nonumber\\
\mathbb E_{\tilde{\mathbb Q}}[q(\tilde S_T)\wedge\Lambda(\tilde {\mathbb S}_T+1)]
&=&\lim_{m\rightarrow\infty}
\mathbb E_{\tilde{\mathbb Q}}[q(\dot S^{(m)}_T)\wedge\Lambda(\dot {S}^{(m)}_T+1)].
\nonumber
\end{eqnarray}
From \reff{5.12+} and \reff{5.100},
 it follows that we can choose
$m$ sufficiently large such that
\begin{eqnarray}\label{5.17}
&&
\left|\mathbb E_{\tilde{\mathbb Q}}[G(\tilde{\mathbb S})]
-\mathbb E_{\tilde{\mathbb Q}}[G(\dot S^{(m)})]\right|\leq 2\epsilon,\\
&&
\left|\mathbb E_{\tilde{\mathbb Q}}[f_i(\tilde{\mathbb S})]
-\mathbb E_{\tilde{\mathbb Q}}[f_i(\dot S^{(m)})]\right|
\leq 2\epsilon, \ \ i\leq N-1,\nonumber\\
&&
\left|\mathbb{E}_{\tilde{\mathbb Q}}[q({\tilde{\mathbb S}}_T)
\wedge\Lambda(\tilde{\mathbb S}_T+1)]-
\mathbb{E}_{\tilde{\mathbb Q}}[q(\dot S^{(m)}_T)
\wedge\Lambda(\dot {S}^{(m)}_T+1)]\right|
\leq2\epsilon.\nonumber
\end{eqnarray}
Choose such $m$ and denote $\dot{S}=\dot{S}^{(m)}$. The stochastic
process ${\{\dot S_t\}}_{t=0}^T$ is a piecewise constant process,
and the jump times
are lying on a finite grid. Denote the grid
by $\cT=\{t_1,...,t_r,T\}$, where
$0=t_0<t_1<...<t_r<T$.

\textbf{Second step:}
Let
$(\Omega^W,\mathcal{F}^W,\mathbb{P}^W)$ be a complete probability
space
together with a standard
Brownian motion
and the natural filtration
$\mathcal{F}^W_t=\sigma{\{W_s|s\leq{t}\}}$.

From Theorem 1 in Skorokhod (1976) and the fact that
the random variables $W_{t_{i+1}}-W_{t_i}$, $i=0,..,r-1$ are independent,
it follows that
we can find a sequence of measurable function
$g^{(1)}_i,g^{(2)}_i:\mathbb{R}^{2i-1}\rightarrow\mathbb{R}$, $i=1,...,r$ with the following property.
The stochastic processes (adapted to the Brownian filtration)
${\{\dot{S}^W_{t_i}\}}_{i=0}^r$
and ${\{\hat{M}^W_{t_i}\}}_{i=0}^r$
which are given by
the recursion relations
$$\dot{S}^W_{t_0}=1, \ \ \hat{M}^W_{t_0}=\hat M_0$$
and for $i>0$
\begin{eqnarray*}
&\dot {S}^W_{t_i}=g^{(1)}_i(W_{t_{i+1}}-W_{t_i},\dot{S}^W_{t_0},...,\dot{S}^W_{t_{i-1}},
\hat{M}^W_{t_0},...,\hat{M}^W_{t_{i-1}}), \\
&\hat {M}^W_{t_i}=g^{(2)}_i(W_{t_{i+1}}-W_{t_i},\dot{S}^W_{t_0},...,\dot{S}^W_{t_{i-1}},
\hat{M}^W_{t_0},...,\hat{M}^W_{t_{i-1}})
\end{eqnarray*}
have the same joint distribution as the processes
${\{\dot{S}_{t_i}\}}_{i=0}^r$
and ${\{\hat{M}_{t_i}\}}_{i=0}^r$.
Namely, the distribution of
$$(\dot {S}^W_{t_0},...,\dot {S}^W_{t_r},\hat{M}^W_{t_0},...,\hat{M}^W_{t_r})$$
under the probability measure $\mathbb{P}^W$ is equals to
the distribution of
$$(\dot {S}_{t_0},...,\dot {S}_{t_r},\hat{M}_{t_0},...,\hat{M}_{t_r})$$
under the probability measure $\tilde{\mathbb Q}$.

Since the Brownian motion increments are independent,
 for any $i<r$,
$$
\mathbb E_{\mathbb P^W}(\hat M^W_{t_{i+1}}|\mathcal F^W_{t_i})=
\mathbb E_{\mathbb P^W}(\hat M^W_{t_{i+1}}|
\dot S^W_{t_1},...,\dot S^W_{t_i},\hat M^W_{t_1},...,\hat M^W_{t_i})=\hat M^W_{t_i}.
$$
Thus, we
can extend the martingale
${\{\hat{M}^W_{t_i}\}}_{i=0}^r$
to a continuous time martingale (Brownian martingale)
$$
\hat M^W_t=\mathbb E_{\mathbb P^W}(\hat M^W_{t_r}|\mathcal F^W_t), \ \ t\in [0,T].
$$

Next, we define the stochastic process ${\{S^W_t\}}_{t=0}^T$
by the following linear interpolation,
$$
S^W_t={\chi}_{[0,t_1]}(t)+\sum_{i=1}^r
\frac{(t-t_i)\dot S^W_{t_i}+(t_{i+1}-t)\dot S^W_{t_{i-1}}}{t_{i+1}-t_i}\
{\chi}_{(t_i,t_{i+1}]}(t),
$$
where we set $t_{r+1}=T$. Observe that the stochastic process
$S^W$ is continuous and adapted to the Brownian filtration.
Since
$$
\frac{{\dot S}^W_{t_{i+1}}}{\dot S^W_{t_i}}\in\{1,e^{\epsilon},e^{-\epsilon}\},
$$
it follows from \reff{5.200} that (for $\epsilon$ sufficiently small)
\begin{equation}\label{5.18}
\left|\hat M^W_t-{S}^W_t\right|
\leq (\tilde\kappa+10\epsilon){S}^W_t, \ \ t\in [0,T].
\end{equation}
Set,
$$
\dot S^W_t=\sum_{i=0}^{r-1}\dot S^W_{t_i}{\chi}_{[t_i,t_{i+1})}(t)+
\dot S^W_{t_r}{\chi}_{[t_r,T]}(t), \ \ t\in [0,T].
$$
Clearly, the processes $\dot S^W$ and $\dot S$ have the same distribution and
and consequently,
\begin{eqnarray}
\label{5.19}
\mathbb E_{\mathbb P^W}[G(\dot S^W)]&=&
\mathbb E_{\tilde{\mathbb Q}}[G(\dot S)],\\
\mathbb E_{\mathbb P^W}[f_i(\dot S^W)]&=&
\mathbb E_{\tilde{\mathbb Q}}[f_i(\dot S)], \ \ i\leq N-1,\nonumber\\
\mathbb E_{\mathbb P^W}[q(\dot S^W_T)\wedge\Lambda(\dot S^W_T+1)]&=&
\mathbb{E}_{\tilde{\mathbb Q}}[q(\dot S_T)\wedge\Lambda(\dot {S}_T+1)].\nonumber
\end{eqnarray}
Also, \reff{5.17} and \reff{5.19} imply that
$$
\mathbb E_{\mathbb P^W}[q(\dot S^W_T)\wedge\Lambda(\dot {S}^W_T+1)]
\leq 2\epsilon+ {{\mathcal L}}_N+B,
$$
where $B$ is given in Definition \ref{dfn4.1}. Therefore,
 there exists a constant
$C$ (which does not depend on $\epsilon>0$ and $\Lambda>1$) such that
$\mathbb E_{\mathbb P^W}\dot S^W_T\leq C$.
This together with the Kolmogorov inequality for the martingale
$\hat M^W$ yield that
\begin{eqnarray*}
&\mathbb P^W\left(\|S^W\|>\frac{1}{\sqrt\epsilon}\right)\leq
\mathbb P^W\left(\|\hat{M}^W\|>\frac{1}{(1+\tilde\kappa+10\epsilon)\sqrt\epsilon}\right)\leq\\
&\mathbb E_{\mathbb P^W}[\hat M^W_T
(1+\tilde\kappa+10\epsilon)\sqrt\epsilon]
\leq C (1+\tilde\kappa+10\epsilon)^2  \sqrt\epsilon.
\end{eqnarray*}
Observe that by construction $\|S^W-\dot S^W\|\leq 4\epsilon ||S^W||$. {Thus from Assumption \ref{asm2.1} it follows that}
\begin{eqnarray*}
&\mathbb E_{\mathbb P^W}[|G(S^W)-G(\dot S^W)|]\leq \mathbb E_{\mathbb P^W}[K \chi_{\{|S^W\|>1/\sqrt \epsilon\}}+4L\sqrt\epsilon
\chi_{\{|S^W\|\leq 1/\sqrt \epsilon\}}]\\
&\leq ( K C(1+\tilde\kappa+10\epsilon)^2 + 4L)\sqrt\epsilon.
\end{eqnarray*}
{Similarly for path dependent $f_i$ we have}
$$\mathbb E_{\mathbb P^W}[|f_i(S^W)-f_i(\dot S^W)|]\leq (2 \|f_i\|_{\infty} C(1+\tilde\kappa+10\epsilon)^2 + 4L)\sqrt\epsilon$$
{where $\|f_i\|_{\infty} $ is the uniform bound of the path dependent claim $|f_i|$.}
{Since $S^W_T=\dot S^W_T$ then for non path dependent $f_i$ we have a trivial estimate.}
We now use these inequalities together with
\reff{5.17} and \reff{5.19}, to construct
 a constant $\tilde C$ satisfying,
\begin{eqnarray}\label{5.20+}
&&
\left|\mathbb E_{\mathbb P^W}[G(S^W)]
-\mathbb E_{\tilde{\mathbb Q}}[G(\tilde{\mathbb S})]\right|
\leq \tilde C\sqrt\epsilon,\\
&&
\left|\mathbb E_{\mathbb P^W}[f_i(S^W)]
-\mathbb E_{\tilde{\mathbb Q}}[f_i(\tilde{\mathbb S})]\right|
\leq \tilde C\sqrt\epsilon, \ \ i\leq N-1,\nonumber\\
&&
\left|\mathbb{E}_{\tilde{\mathbb Q}}
\left[q(S^W_T)\wedge\Lambda ({S}^W_T+1)\right]
-\mathbb{E}_{\tilde{\mathbb Q}}
\left[q({\tilde{\mathbb S}}_T)\wedge\Lambda(\tilde{\mathbb S}_T+1)\right]
\right |\leq \tilde C\sqrt\epsilon.\nonumber
\end{eqnarray}

\textbf{Third step:}
Let $x_{\Lambda}$ be the solution of the equation
$q(x)=\Lambda (x+1)$ where we assume that $\Lambda>q(0)$
so the equation has exactly one solution. {Indeed (if by contradiction) we have two solutions
$0<x<y$ then
$$\frac{q(y)-q(x)}{y-x}=\Lambda<\frac{q(x)-q(0)}{x}$$
and we get contradiction to convexity}.
Define the stochastic processes by,
$$
\rho_t:=\frac{\hat M^W_t}{S^W_t}, \qquad
M_t:=\mathbb E_{\mathbb P^W}(\hat M^W_T\wedge \rho_T x_{\Lambda}|\mathcal F^W_t),
$$
and
$$
S_t:=\frac{M_t }{\rho_t}\frac{t+(T-t)\rho_0/M_0}{T}, \ \ t\in[0,T].
$$
In view of \reff{5.20+},
\begin{align}\label{5.20++}
\mathbb {E}_{\mathbb P^W}
\left[\hat M^W_T{\chi}_{\hat M^W_T>\rho_T x_{\Lambda}}\right]
& \leq
2\mathbb {E}_{\mathbb P^W}\left[S^W_T{\chi}_{ S^W_T>x_{\Lambda}}\right]\\
&\leq \frac{2}{\Lambda}\mathbb{E}_{\tilde{\mathbb Q}}
\left[q(S^W_T)\wedge\Lambda ({S}^W_T+1)\right]\nonumber
\\ & \leq
\frac{2 (\tilde C\sqrt\epsilon+\mathcal L_N+B)}{\Lambda}.\nonumber
\end{align}
Thus $|M_0-\rho_0|=|M_0-\hat M^W_0|\leq{C_1}/{\Lambda}$
for some constant $C_1$.
This together with (\ref{5.18}) implies
that for sufficiently large $\Lambda$
we have the following inequality,
\begin{equation}\label{5.21}
| M_t-{S}_t|\leq \left(\tilde\kappa+10\epsilon+\frac{1}{\sqrt\Lambda}\right){S}_t, \ \ t\in [0,T].
\end{equation}
Next, consider the martingale
$$
m_t:=\mathbb {E}_{\mathbb P^W}\left[
\hat M^W_T{\chi}_{\{\hat M^W_T>\rho_T x_{\Lambda}\}}\ |
\mathcal F^W_t\right], \ \ t\in [0,T].
$$
{Observe that $0\leq\hat M^W_t-M_t\leq m_t$, $t\in [0,T]$.
Thus we obtain that there exists a constant $C_2$ such that}
\begin{align}\label{5.cal}
\|S^W-S\| & \leq |M^W-M\|\sup_{0\leq t\leq T}\frac{1}{\rho_t}+
\|M\| \sup_{0\leq t\leq T}\left|\frac{1}{\rho_t}-\frac{t+(T-t)\rho_0/M_0}{T\rho_t}\right|\\
& \leq
2\|m\|+\frac{C_2}{\Lambda}\|M^W\|.\nonumber
\end{align}
The Kolmogorov inequality and \reff{5.20++} imply that
$$\mathbb P^W\left(\|m\|>{1}/{\sqrt\Lambda}\right)\leq {C_3}/{\sqrt\Lambda}$$
for some constant $C_3$.
Moreover,
$$\mathbb P^W\left(\|M^W\|>{\sqrt\Lambda}\right)\leq \frac{M^W_0}{\sqrt\Lambda}\leq\frac{2}{\sqrt\Lambda}.$$
From (\ref{5.cal}) we conclude that
\begin{equation*}
\mathbb P^W\left(\|S^W-S\|>\frac{2+C_2}{\sqrt\Lambda}\right)\leq \frac{2+C_3}{\sqrt\Lambda}.
\end{equation*}
Thus from Assumption \ref{asm.regularity} it follows that
\begin{eqnarray}\label{5.22}
&\mathbb E_{\mathbb P^W}[|G(S^W)-G(S)|]\leq \\
&\mathbb E_{\mathbb P^W}[K \chi_{\{|S^W-S\|>\frac{2+C_2}{\sqrt\Lambda}\}}+L\frac{2+C_2}{\sqrt\Lambda}
\chi_{\{|S^W\|\leq \frac{2+C_2}{\sqrt\Lambda} \}}]\leq \frac{C_4}{\sqrt\Lambda}\nonumber
\end{eqnarray}
for some constant $C_4$.
{Similarly for path--dependent $f_i$ we get}
\begin{equation}\label{5.22new}
\mathbb E_{\mathbb P^W}[|f_i(S^W)-f_i(S)|]\leq \frac{C_4}{\sqrt\Lambda}.
\end{equation}
{For non path--dependent $f_i$, $i<N$ we have}
\begin{align}\label{5.22neww}
\mathbb E_{\mathbb P^W}[|f_i(S^W)-f_i(S)|]& \leq L\mathbb E_{\mathbb P^W}[|S^W_T-S_T|]\\
& \leq
L\mathbb {E}_{\mathbb P^W}
\left[S^W_T{\chi}_{ S^W_T>x_{\Lambda}}\right]
\nonumber \\
& \leq \frac{L (\tilde C\sqrt\epsilon+\mathcal L_N+B)}{\Lambda}\nonumber
\end{align}
 where the last inequality follows from
(\ref{5.20++}).
The only remanning delicate point
is $i=N$. From the fact that
$S_T=S^W_T\wedge x_{\Lambda}$
we get
$$
\mathbb E_{\mathbb P^W}[q(S_T)]\leq
\mathbb E_{\mathbb P^W}[q(S^W_T)\wedge \Lambda(S^W_T+1)].
$$
This together with (\ref{5.20+}), (\ref{5.21}) and (\ref{5.22})--(\ref{5.22neww}) yields
that for sufficiently large $\Lambda$ and small $\epsilon>0$ the distribution of
$(S,M)$ on the space
$\hat\Omega:=\Omega\times \mathcal{C}^{++}_{[0,T]}$ is an element in
$\mathcal{M}_{\hat\kappa,(\tilde{\mathcal L}_1,...,\tilde{\mathcal L}_N)}.$ Furthermore,
$$
\left|\mathbb E_{\mathbb P^W}[G(S)]-\mathbb E_{\tilde{\mathbb Q}}[G(\tilde{\mathbb S})]
\right|\leq \tilde C\sqrt\epsilon+\frac{C_4}{\sqrt \Lambda}.
$$
We now use \reff{5.15}, to obtain
$$
V_{\kappa}(G)<L(e^{2\epsilon}+\epsilon-1)
\frac{\hat C^2}{(1-8\kappa)}+\tilde C \sqrt\epsilon+\frac{C_4}{\sqrt \Lambda}+
\sup_{\hat{\mathbb Q}\in \mathcal{M}_{\hat\kappa,(\tilde{\mathcal L}_1,...,\tilde{\mathcal L}_N)}}\mathbb E_{\hat{\mathbb Q}}[G(\mathbb S^{(1)})].
$$
Finally we apply Lemma \ref{lem5.1} and take
the limits $\Lambda\rightarrow\infty$, $\epsilon\downarrow 0$,
$\hat\kappa\downarrow\kappa$,
$\tilde{\mathcal{L}}_i\downarrow\mathcal L_i$, $i\leq N$.
The result is
$$
V_{\kappa}(G)\leq
\sup_{\hat{\mathbb Q}\in \mathcal{M}_{\kappa,\cL}}
\mathbb E_{\hat{\mathbb Q}}[G(\mathbb S^{(1)})].
$$
This concludes the proof of the lemma as well as the proof
of the main result.
\end{proof}

\end{document}